%% file: arxiv.tex
\documentclass[11pt]{article}
\usepackage{arxiv}

\usepackage[utf8]{inputenc}
\usepackage[T1]{fontenc}
\usepackage{amsmath,amssymb,amsthm,bm}
\usepackage{mathtools}
\usepackage{graphicx,booktabs,multirow}
\usepackage{setspace}   
\usepackage[numbers,round]{natbib}
\usepackage[hidelinks]{hyperref}
\usepackage{orcidlink}
\usepackage[capitalise]{cleveref}

\input{macros}

\crefname{condition}{Condition}{Conditions}
\Crefname{condition}{Condition}{Conditions}
\crefname{lemma}{Lemma}{Lemmas}
\crefname{remark}{Remark}{Remarks}

\theoremstyle{plain}

\newtheorem{proposition}{Proposition}

\newtheorem{lemma}{Lemma}
\theoremstyle{definition}

\newtheorem{remark}{Remark}

\newcommand{\mref}[1]{\cref{#1}}
\newcommand{\mainof}{}

\title{Bayesian Group Testing Regression with a Shape-Free Dilution Curve}
\author{%
  Chun-Hao Yang\,\orcidlink{0000-0002-2522-5957}\thanks{Corresponding author: \texttt{chunhaoy@ntu.edu.tw}} \\
  Institute of Statistics and Data Science \\ National Taiwan University \\ Taipei, Taiwan
  \And
  Wei-Yan Hong \\
  Institute of Statistics and Data Science \\ National Taiwan University \\ Taipei, Taiwan
}
\date{}

\begin{document}
\maketitle

\begin{abstract}
\input{sections/00-abstract}
\end{abstract}

\keywords{Bayes factor \and Dilution effect \and Dirichlet prior \and Gibbs
sampling \and Group testing \and Group testing regression}

\input{sections/01-introduction}
\input{sections/02-model}
\input{sections/03-bayes}
\input{sections/04-simulation}
\input{sections/05-application}
\input{sections/06-conclusion}

\bibliographystyle{plainnat}
\bibliography{references}

\appendix
\renewcommand{\theequation}{\thesection.\arabic{equation}}
\numberwithin{equation}{section}
\input{appendix}

\end{document}

%% file: macros.tex
\newcommand{\Se}{\mathrm{Se}}
\newcommand{\Sp}{\mathrm{Sp}}
\newcommand{\Prob}{\mathrm{pr}}
\newcommand{\Var}{\mathrm{Var}}
\newcommand{\Cov}{\mathrm{Cov}}

\newcommand{\tp}{^{\mathsf{T}}}
\DeclareMathOperator*{\argmax}{arg\,max}

\newcommand{\R}{\mathbb{R}}
\newcommand{\E}{\mathbb{E}}

\newcommand{\Ytil}{\widetilde{Y}}

\newcommand{\bbeta}{\bm{\beta}}
\newcommand{\blambda}{\bm{\lambda}}

\newcommand{\bX}{\bm{X}}
\newcommand{\bY}{\bm{Y}}


%% file: sections/00-abstract.tex
Group testing pools specimens to cut the cost of screening, but pooling positive specimens with negative ones lowers assay sensitivity, so that the sensitivity of a pooled test depends on how many of its members are positive. Regression models for group testing accommodate this dilution effect through submodels that fix a one-parameter shape for that dependence. We propose BADGER (\textbf{B}ayesian \textbf{A}nalysis of \textbf{D}ilution in \textbf{G}roup t\textbf{E}sting \textbf{R}egression), a regression model incorporating a shape-free dilution curve. The pooled sensitivity is modeled as a nondecreasing function represented by nonnegative increments with a Dirichlet prior, so that the parametric submodels become special cases. By an appropriate augmentation, every full conditional of BADGER is in closed form and inference is carried out by an exact Gibbs sampler. The evidence for dilution is measured by a Bayes factor computed from the posterior draws. We validate the BADGER model through simulation studies across different dilution shapes, prevalences and pool sizes. We further illustrate the method on hepatitis~E serology from a national health survey.

%% file: sections/01-introduction.tex
\section{Introduction}
\label{sec:intro}

Group testing, introduced by \citet{Dorfman1943}, pools specimens from several individuals into a single assay: a negative pool clears every member, while a positive pool triggers follow-up testing of its members. When prevalence is low, the number of assays falls sharply, and the design has been used for HIV \citep{Krajden2014}, influenza \citep{Van2012}, chlamydia and gonorrhoea \citep{Lewis2012} and COVID-19 \citep{Eberhardt2020}. Statistical work on group testing has moved from estimating a single prevalence \citep{Farrington1992,Vansteelandt2000} to \emph{group testing regression}, in which the individual infection probability is related to covariates and must be recovered from pooled, and possibly retested, outcomes \citep{Xie2001,Bilder2009,Chen2009,Delaigle2011,Wang2014,McMahan2017, ChatterjeeBandyopadhyay2020}.

A central complication is the \emph{dilution effect}. When a positive specimen is pooled with many negatives, the analyte is diluted and the assay may fail to detect it, so a truly positive pool can test negative and the sensitivity of a pooled test depends on how many of its members are positive. Treating pooled sensitivity as a constant, as much early work did, is unrealistic and biases the regression: lost sensitivity is misattributed to lower prevalence. Approaches to dilution include the cost analysis of \citet{Hwang1976}, biomarker-threshold models in which pooled concentrations are averages of individual concentrations \citep{McMahan2013,Mokalled2021}, the nonparametric deconvolution approach of \citet{Delaigle2015}, and the Bayesian biomarker model of \citet{Self2022}. \citet{Warasi2017} introduced \emph{dilution submodels} for regression with binary pooled outcomes: pooled sensitivity is an explicit function $h(k,c,\lambda)$ of the number $k$ of infected members of a pool of size $c$, with a single parameter $\lambda$ and a logistic, probit or complementary log-log shape, estimated jointly with the regression coefficients and tested for by a likelihood ratio test. The shape is, however, an assumption: nothing in the assay chemistry singles out a logistic curve in $k$, and if the shape is wrong both the dilution curve and the regression coefficients can be distorted.

This paper develops a shape-free dilution curve for group testing regression, which we call BADGER (\textbf{B}ayesian \textbf{A}nalysis of \textbf{D}ilution in \textbf{G}roup t\textbf{E}sting \textbf{R}egression). It is built on a \emph{saturated} dilution submodel: pooled sensitivity is a free nondecreasing function of $k$ on $\{1,\dots,c\}$, tied only to the assay sensitivity at $k=c$, and represented through $c-1$ nonnegative increments, so that every parametric submodel is a curve inside the resulting simplex. A symmetric Dirichlet prior on the increments turns the strength of the prior into a single interpretable concentration parameter. Three features of the resulting analysis motivate the Bayesian route. First, the posterior is not available in closed form, but a latent-variable augmentation (latent individual statuses and a latent \emph{cause} for each pooled reading) makes every full conditional tractable, the increments having a Dirichlet conditional, so that a Gibbs sampler is simple and fast. Second, uncertainty about the dilution curve and the regression coefficients is obtained from the posterior directly; frequentist interval estimation for the increments is awkward because the maximum likelihood estimate often sits on the boundary of the parameter space, where Wald intervals fail and bootstrap bands are conservative.  Third, the natural null hypothesis of no dilution places all $c-1$ increments on the boundary simultaneously, so the likelihood ratio statistic has a chi-bar-square limit whose weights depend on the unknown information matrix and which is approached slowly when the assay is nearly perfect \citep{SelfLiang1987}; a Bayes factor sidesteps the problem, and with the flat Dirichlet prior it is available in closed form from the Gibbs output.

The paper is organized as follows. \Cref{sec:model} reviews group testing regression and dilution submodels proposed by \citet{Warasi2017}. \Cref{sec:bayes} presents the saturated model, the prior, the structure of the posterior and the Gibbs sampler. We also derive the Bayes factor for a dilution effect and its estimator. \Cref{sec:sim} describes the simulation studies and \Cref{sec:app} demonstrates the application to hepatitis~E serology from the National Health and Nutrition Examination Survey (NHANES). Finally we conclude the paper in \Cref{sec:discussion}.

%% file: sections/02-model.tex
\section{Group Testing Regression and Dilution Submodels}
\label{sec:model}

Suppose $n$ individuals are partitioned into $J$ non-overlapping pools, with pool $j$ having $c_j$ members. Let $\Ytil_{ij}\in\{0,1\}$ be the true infection status of member $i$ of pool $j$ and $\bX_{ij}=(1,x_{ij1},\dots,x_{ijp})\tp$ its covariate vector. Statuses are independent given covariates, with
\begin{equation*}
\Ytil_{ij}\mid\bX_{ij}\sim\mathrm{Ber}(p_{ij}),\qquad
p_{ij}=g^{-1}(\bX_{ij}\tp\bbeta),
\end{equation*}
for a known link $g$ (the logit link throughout) and coefficients $\bbeta\in\R^{p+1}$. Let $K_j=\sum_{i=1}^{c_j}\Ytil_{ij}$ be the number of positives in pool $j$. The pool has outcome $Z_j\in\{0,1\}$; for the individual test, we have $Y_{ij}\mid\Ytil_{ij}\sim\mathrm{Ber}(\Se\Ytil_{ij}+(1-\Sp)(1-\Ytil_{ij}))$ independently, where $\Se$ and $\Sp$ are the sensitivity and specificity of the assay on an individual specimen, assumed known.

\subsection{Dilution Submodels}

A dilution submodel specifies the pooled sensitivity as a function of the number of positives,
\begin{equation*}
h(k, c_j, \lambda)=\Prob(Z_j=1\mid K_j=k),\qquad k=0,1,\dots,c_j,
\end{equation*}
with $h(0)=1-\Sp$ (a pool without positives reads positive only through a false positive) and $h(c_j)=\Se$ (a pool of positives is as sensitive as an individual specimen), and nondecreasing in $k$. \citet{Warasi2017} take $h(k)$ to be a one-parameter curve in $\tau(k,c_j)=(k-c_j)/c_j$,
\begin{equation}
\begin{split}
\text{logistic:}\quad & h(k, c_j, \lambda)=\frac{\exp\{\lambda\tau\}}{1/\Se+\exp\{\lambda\tau\}-1},\\
\text{probit:}\quad & h(k, c_j, \lambda)=\Phi\{\Phi^{-1}(\Se)+\lambda\tau\},\\
\text{complementary log-log:}\quad & h(k, c_j, \lambda)=1-\exp\bigl(-\exp[\log\{-\log(1-\Se)\}+\lambda\tau]\bigr),
\end{split}
\label{eq:param}
\end{equation}
with $\lambda\ge 0$ being the dilution parameter and no dilution at $\lambda=0$ and $\Phi$ is the cumulative distribution function for the standard normal distribution. 

\subsection{Dorfman Testing}

\citet{Warasi2017} consider the two-stage Dorfman protocol \citep{Dorfman1943}. In the first stage each pool is assayed once as a single specimen, yielding $Z_j$. A pool that tests negative is declared clean and its members are all reported negative, with no further assays. A pool that tests positive is resolved in the second stage, in which every member is retested individually, yielding the retest outcomes $Y_{ij}$, $i=1,\dots,c_j$; these are the only individual outcomes observed. The observed data are therefore $\{Z_j\}_{j=1}^J$ together with $\{Y_{ij}\}$ on the positive pools, and the number of assays is $J+\sum_j c_j\mathbf{1}\{Z_j=1\}$, which is far below $n$ when prevalence is low. The observed data from the Dorfman protocol is denoted $\mathcal{D} = \{\mathbf{Z}, \mathbf{Y}, \mathbf{X}\}$.

Denote the true infection status by $\widetilde{\mathbf{Y}}$. The observed data likelihood is 
\begin{equation}
L(\theta \mid \mathcal{D}) = \sum_{\widetilde{\mathbf{Y}} \in \{0,1\}^n} T_1(\bbeta, \mathcal{D}, \widetilde{\mathbf{Y}}) T_2(\lambda, \mathcal{D}, \widetilde{\mathbf{Y}}) T_3(\mathcal{D}, \widetilde{\mathbf{Y}})
\label{eq:lik}
\end{equation}
where $\theta = (\bbeta, \lambda)$ and 
\begin{align*}
T_1\left(\boldsymbol{\beta}, \mathcal{D}_D, \widetilde{\mathbf{Y}}\right) & =\prod_{j=1}^J \prod_{i=1}^{c_j} p_{i j}^{\tilde{Y}_{i j}}\left(1-p_{i j}\right)^{1-\tilde{Y}_{i j}} \\
T_2\left(\lambda, \mathcal{D}_D, \widetilde{\mathbf{Y}}\right) & =\prod_{j=1}^J \prod_{k=1}^{c_j}\left[h\left(k, c_j, \lambda\right)^{Z_j}\left\{1-h\left(k, c_j, \lambda\right)\right\}^{1-Z_j}\right]^{I_{j k}} \\
T_3\left(\mathcal{D}_D, \widetilde{\mathbf{Y}}\right) & =\prod_{j=1}^J\left\{S_p^{1-Z_j}\left(1-S_p\right)^{Z_j}\right\}^{I_{j 0}}\left[\prod_i\phi(Y_{ij}\mid \Ytil_{ij})\right]^{Z_j},
\end{align*}
where $I_{jk} = \mathbf{1}(K_j = k)$, for $k = 0, 1, \ldots, c_j$, $\phi(y\mid1)=\Se^y(1-\Se)^{1-y}$, and $\phi(y\mid0)=(1-\Sp)^y\Sp^{1-y}$. \citet{Warasi2017} proposed an expectation-maximization algorithm to find the MLE of $\theta$ by treating the individuals' true infection statuses as missing data.  

%% file: sections/03-bayes.tex
\section{Bayesian Analysis for Dilution Effect}
\label{sec:bayes}

This section proposes a saturated dilution model that replaces the parametric curves of \cref{eq:param} by a free, monotone function of the number of positives, so that the only structure imposed on $h$ is the nondecreasing shape it must have by construction. We estimate the model in a Bayesian framework, place a Dirichlet prior on the increments of $h$, and show that the resulting posterior admits an efficient sampler.

\subsection{The Saturated Dilution Model}
\label{sec:saturated}

We take a common pool size $c$ and write the pooled sensitivity through nonnegative increments,
\begin{equation}
h(k)=\sum_{r=0}^{k}\lambda_r,\qquad \lambda_0=1-\Sp,\quad \lambda_r\ge0,\quad
\sum_{r=1}^{c}\lambda_r=\Se-(1-\Sp)=:C,
\label{eq:sat}
\end{equation}
so that $h$ is nondecreasing, $h(0)=1-\Sp$ and $h(c)=\Se$, and no other restriction is imposed. The dilution parameter $\blambda=(\lambda_1,\dots,\lambda_c)$ lies on the scaled simplex $C\Delta_c$, has $c-1$ free components, and the no-dilution model is the vertex $\blambda^{0}=(C,0,\dots,0)$. Every curve in \cref{eq:param} is nondecreasing with $h(c)=\Se$, so each parametric submodel is a curve inside $C\Delta_c$; the saturated model is their common envelope. We call it saturated because $h$ is a free function on $\{1,\dots,c\}$; for fixed $c$ it is a finite-dimensional, order-restricted model, and ``nonparametric'' only in the sense of being shape-free. It will be convenient to extend the increments by $\lambda_{c+1}=1-\Se$, so that $\sum_{r=0}^{c+1}\lambda_r=1$ and
\begin{equation}
1-h(k)=\sum_{r=k+1}^{c+1}\lambda_r .
\label{eq:oneminus}
\end{equation}

\subsection{Prior and Posterior}
\label{sec:prior}

Let $\bm w=\blambda/C\in\Delta_c$. The pool sensitivity becomes $h(k) = 1-\Sp + C\sum_{i=1}^k w_k$. We use the symmetric Dirichlet prior
\begin{equation}
\bm w\sim\mathrm{Dirichlet}(a\bm1),\qquad
\bbeta\sim N(\bm0,\tau^2 I),
\label{eq:prior}
\end{equation}
with a common concentration $a>0$ on every increment. Its prior mean curve
$$
\E[h(k)]=1-\Sp+kC/c = \frac{k}{c} \cdot \Se + \frac{c-k}{c} \cdot (1-\Sp)
$$
is linear in $k$. Since $\sum_{r\le k}w_r\sim\mathrm{Beta}(ka,(c-k)a)$ by the aggregation property of the Dirichlet, the prior variance and covariance are, for $k\le k'$,
\begin{align*}
    \Var[h(k)] & = \frac{C^{2}}{ca+1}\cdot\frac{k(c-k)}{c^{2}},\\
    \Cov[h(k),h(k')] & = \frac{C^{2}}{ca+1}\cdot\frac{k(c-k')}{c^{2}}.
\end{align*}
Both vanish at $k=0$ and $k'=c$, where $h$ is pinned at $1-\Sp$ and $\Se$, and the variance is largest in the middle of the curve. The concentration parameter $a$ governs how the prior mass is spread: $a=1$ is uniform on $\Delta_c$, and $a<1$ concentrates mass near the faces and vertices, favouring curves in which a few increments carry most of the rise. We use $a=1/c$ as the default, because the no-dilution model sits at the vertex $\bm e_1$ and a prior that reaches it is what the Bayes factor of \cref{sec:bf} needs. 

Under the saturated model, the parameter is $\theta=(\bbeta,\bm w)$ and the observed-data likelihood is $L(\theta\mid\mathcal D)$ of \cref{eq:lik} with the parametric curve $h(k,c_j,\lambda)$ replaced by the free curve $h(k)$ of \cref{eq:sat}. The joint posterior is
\begin{align*}
\pi(\bbeta,\bm w\mid\mathcal D)
&\propto \pi(\bbeta)\,\mathrm{Dir}(\bm w;a\bm1)\,L(\theta\mid\mathcal D)\\
&=\pi(\bbeta)\,\mathrm{Dir}(\bm w;a\bm1)
\sum_{\widetilde{\bY}\in\{0,1\}^{n}}
T_1(\bbeta,\mathcal D,\widetilde{\bY})\,
T_2(\bm w,\mathcal D,\widetilde{\bY})\,
T_3(\mathcal D,\widetilde{\bY}).
\end{align*}
Both conditionals follow from a pool factorization of \cref{eq:lik}. The pools are non-overlapping, so $\widetilde{\bY}$ decomposes into disjoint blocks $\widetilde{\bY}_1,\dots,\widetilde{\bY}_J$ and $\{0,1\}^{n}$ is the Cartesian product of the $J$ blockwise sets $\{0,1\}^{c}$; each $j$th factor of $T_1$, $T_2$ and $T_3$ depends on $\widetilde{\bY}$ only through $\widetilde{\bY}_j$. The sum over $\widetilde{\bY}$ therefore distributes over the product, one term on the right for each choice of a block in every pool, and
\begin{equation}
L(\theta\mid\mathcal D)=\prod_{j=1}^{J}\ell_j(\bbeta,\bm w),\qquad
\ell_j(\bbeta,\bm w)=\sum_{\bm u\in\{0,1\}^{c}}
\pi_j(\bm u)\,\psi_j(\bm u)\,g_j(|\bm u|),
\label{eq:poolfac}
\end{equation}
where $|\bm u|=\sum_iu_i$ and
\begin{equation*}
\pi_j(\bm u)=\prod_{i=1}^{c}p_{ij}^{u_i}(1-p_{ij})^{1-u_i},\quad
\psi_j(\bm u)=\Bigl\{\prod_{i=1}^{c}\phi(Y_{ij}\mid u_i)\Bigr\}^{Z_j},\quad
g_j(k)=h(k)^{Z_j}\{1-h(k)\}^{1-Z_j}
\end{equation*}
are the pool-$j$ factors of $T_1$, $T_3$ and $T_2$ respectively; the $k=0$ term of $g_j$ is the $I_{j0}$ factor of $T_3$, since $h(0)=1-\Sp$ makes $g_j(0)=(1-\Sp)^{Z_j}\Sp^{1-Z_j}$. This reduces the $2^{n}$ configurations to $J$ sums of $2^{c}$ terms. The parameters separate across these three: $\bbeta$ appears only in $\pi_j$, $\bm w$ only in $g_j$, and $\psi_j$ is free of both. The conditional posterior of $\bm w \mid \bbeta, \mathcal{D}$ is therefore
\begin{equation}
\pi(\bm w\mid\bbeta,\mathcal D)\ \propto\
\prod_{r=1}^{c}w_r^{\,a-1}
\prod_{j=1}^{J}\Bigl\{\sum_{\bm u}\pi_j(\bm u)\,\psi_j(\bm u)\,g_j(|\bm u|)\Bigr\},
\label{eq:wcond}
\end{equation}
with $\bbeta$ entering only through the fixed weights $\pi_j(\bm u)$, and the conditional posterior of $\bbeta \mid \bm w, \mathcal{D}$ is
\begin{equation}
\pi(\bbeta\mid\bm w,\mathcal D)\ \propto\
\exp\Bigl(-\frac{\|\bbeta\|^{2}}{2\tau^{2}}\Bigr)
\prod_{j=1}^{J}\Bigl\{\sum_{\bm u}\pi_j(\bm u)\,\psi_j(\bm u)\,g_j(|\bm u|)\Bigr\},
\label{eq:betacond}
\end{equation}
with $\bm w$ entering only through the fixed weights $g_j(k)$. Neither is a standard distribution. In \cref{eq:betacond} each pool contributes a mixture of $2^{c}$ logistic-regression likelihoods, so the product is not log-concave in $\bbeta$. In \cref{eq:wcond} the Dirichlet kernel is multiplied by a polynomial of degree $J$ in $\bm w$, because $g_j(k)$ is an affine function of $\bm w$ by \cref{eq:sat} and \cref{eq:oneminus}.

\subsection{Gibbs Sampler}
\label{sec:gibbs}

The sampler augments $(\bbeta,\bm w)$ with the latent statuses $\widetilde{\bY}$ and with a latent \emph{cause} $R_j$ for each pool. Write $K_j = |\widetilde{\bY}_j|$ for the number of positives in pool $j$ and the augmented cause variable $R_j$ is defined as 
\begin{equation*}
\Prob(R_j = r \mid Z_j, K_j, \blambda) = \frac{\lambda_r}{g_j(K_j)}, \quad r \in A_j, \quad A_j =\begin{cases}
    \{0,1,\dots,K_j\}, & Z_j=1,\\
    \{K_j+1,\dots,c+1\}, & Z_j=0,
\end{cases}.
\end{equation*}
Write $n_r=\#\{j:R_j=r\}$ and $\bm n=(n_1,\dots,n_c)$. Since the posterior depends on the latent statues $\widetilde{\bY}$ only through the positive counts $K_j$, we define the count weights 
\begin{equation}
q_j(k;\bbeta):=\sum_{\bm u \in \{0, 1\}^c, |\bm u|=k}\pi_j(\bm u)\psi_j(\bm u), \quad k = 0, 1, \ldots, c.
\label{eq:qjk}
\end{equation}
Set $a_{ij}=(1-p_{ij})\phi(Y_{ij}\mid0)^{Z_j}$ and $b_{ij}=p_{ij}\phi(Y_{ij}\mid1)^{Z_j}$, so that $\pi_j(\bm u)\psi_j(\bm u)=\prod_{i}a_{ij}^{1-u_i}b_{ij}^{u_i}$ for $\pi_j$ and $\psi_j$ as in \cref{eq:poolfac}. Let $q^{(i)}_j(k)$ be the weight \cref{eq:qjk} accumulated over the first $i$ members of the pool, $q^{(i)}_j(k)=\sum_{\bm u\in\{0,1\}^{i}:|\bm u|=k}\prod_{l\le i} a_{lj}^{1-u_l}b_{lj}^{u_l}$, so that $q_j(k;\bbeta)=q^{(c)}_j(k)$. Splitting on the status of member $i$ gives the Poisson--binomial recursion 
\begin{equation}
q^{(i)}_j(k)=a_{ij}\,q^{(i-1)}_j(k)+b_{ij}\,q^{(i-1)}_j(k-1),
\qquad q^{(0)}_j(0)=1,
\label{eq:pbrec}
\end{equation}
with $q^{(i)}_j(k)=0$ outside $0\le k\le i$. Hence the four full conditionals are
\begin{align}
\Prob(K_j=k\mid\bbeta,\blambda,\mathcal D)&\ \propto\ q_j(k;\bbeta)\,g_j(k),
\qquad k=0,\dots,c,\label{eq:statuspost}\\
\Prob(R_j=r\mid\widetilde{\bY},\bm Z,\blambda)&\ =\ \lambda_r/g_j(K_j),
\qquad r\in A_j,\label{eq:cause}\\
\bm w\mid\bm R,\widetilde{\bY},\bbeta,\mathcal D&\ \sim\
\mathrm{Dirichlet}(a\bm1+\bm n),\label{eq:dirpost}\\
p(\bbeta\mid\widetilde{\bY},\blambda,\mathcal D)&\ \propto\
\pi(\bbeta)\prod_{j=1}^{J}\prod_{i=1}^{c}
p_{ij}^{\Ytil_{ij}}(1-p_{ij})^{1-\Ytil_{ij}}.\label{eq:betacond2}
\end{align}
For each pool $j$, after drawing $K_j=k$ from \cref{eq:statuspost}, the latent statuses are sampled from the following procedure recursively: for $i=c,\dots,1$, set $\Ytil_{ij}=1$ with probability $b_{ij}q^{(i-1)}_j(k-1)/q^{(i)}_j(k)$ and replace $k$ by $k-\Ytil_{ij}$. The last is the posterior of a logistic regression with the complete binary responses $\Ytil_{ij}$. Therefore $\bbeta$ can be sampled by an independence Metropolis step whose proposal is the Gaussian approximation to that complete-data fit, centred at its maximum likelihood estimate with the inverse observed information as covariance and so free of the current $\bbeta$, or exactly by the P\'olya--Gamma augmentation of \citet{PolsonScottWindle2013}. The details for the derivation of the full conditional posteriors can be found in the supplemental materials. 

\subsection{Bayes Factor for Dilution Effect}
\label{sec:bf}

Testing for a dilution effect is a boundary problem. In the parametric submodels \cref{eq:param}, no dilution is $\lambda=0$ (an endpoint of the parameter space $\lambda\ge0$), so the likelihood ratio test of \citet{Warasi2017} has null limit not $\chi^2_1$ but the two-component mixture $\tfrac12\chi^2_0+\tfrac12\chi^2_1$ \citep{SelfLiang1987}. Under the saturated model the difficulty is sharper. No dilution is the vertex $\bm w=\bm e_1$ of $\Delta_c$, at which $c-1$ constraints are active at once, so the limit is a chi-bar-square $\sum_k w_k\chi^2_k$ whose weights are the orthant probabilities of a Gaussian with covariance determined by the Fisher information at the vertex \citep{Kudo1963}. In this work, we propose to use Bayes factor that avoids the question altogether, since it needs no null reference distribution. The Bayes factor is the ratio of the marginal likelihoods
\begin{equation}
\begin{split}
B_{10}&=\frac{m_1(\mathcal D)}{m_0(\mathcal D)},\qquad
m_0(\mathcal D)=\int L(\bbeta,h^{0}\mid\mathcal D)\,\pi(\bbeta)\,d\bbeta,\\
m_1(\mathcal D)&=\iint L(\bbeta,h_{\bm w}\mid\mathcal D)\,\pi(\bbeta)\,
\mathrm{Dir}(\bm w;a\bm1)\,d\bbeta\,d\bm w,
\end{split}
\label{eq:bf}
\end{equation}
where $h^0$ is the saturated dilution model with $\blambda = (C, 0, \ldots, 0)$.

Both marginal likelihoods integrate $\bbeta$ out at a fixed curve $h$, which we do by importance sampling from a multivariate $t$ at the conditional posterior mode; this gives $m_0(\mathcal D)$. For $m_1(\mathcal D)$ we use the marginal likelihood identity, Bayes' theorem rearranged and so valid at any interior point $\bm w$,
\begin{equation}
m_1(\mathcal D)=\frac{\mathrm{Dir}(\bm w;a\bm1)}{p(\bm w\mid\mathcal D)}
\int L(\bbeta,h_{\bm w}\mid\mathcal D)\,\pi(\bbeta)\,d\bbeta ,
\label{eq:chib}
\end{equation}
whose denominator, the posterior ordinate, is where the augmentation of \cref{sec:gibbs} pays off: it is estimated as \citet{Chib1995} proposes, by averaging a closed-form full conditional over the sampler output. Here \cref{eq:dirpost} makes the conditional of $\bm w$ a single Dirichlet free of $\bbeta$, so that
\begin{equation}
p(\bm w\mid\mathcal D)
=\E\bigl[\mathrm{Dir}\{\bm w;a\bm1+\bm n\}\bigm|\mathcal D\bigr]
\label{eq:ordinate}
\end{equation}
is an average of a closed-form density over draws already in hand. The detailed computations are given in the supplementary material.

Finally, $B_{10}$ depends on the prior on $\bm w$ under $M_1$: the more diffuse that prior over the $c-1$ increment dimensions, the more strongly $M_0$ is favoured at a given amount of evidence. The symmetric prior with $a=1/c$ keeps its mass near the faces of the simplex and is less severe in this respect than the uniform prior.

%% file: sections/04-simulation.tex
\section{Simulation Studies}
\label{sec:sim}

In this section, we present some simulation studies to validate the proposed model, in terms of estimation accuracy, coverage of credible interval, and the detection of dilution effect via Bayes factor.

\subsection{Setup}
\label{sec:simdesign}

The individual status follows $\mathrm{logit}\{\Prob(\Ytil_{ij}=1\mid\bX_{ij})\}=\beta_0+\beta_1x_{ij1}+\beta_2x_{ij2}$ with $\bbeta=(-3,2,1)$, $x_{ij1}\sim N(0,\sigma_x^2)$ and $x_{ij2}\sim\mathrm{Ber}(\pi_x)$. We consider two different designs:
\begin{itemize}
    \item a \emph{low-prevalence} design: $\sigma_x=0.5$, $\pi_x=0.1$; prevalence $\approx 8\%$ and $\Prob(K_j\ge2)=0.05$;
    \item a \emph{moderate-prevalence} design: $\sigma_x=1$, $\pi_x=0.5$; prevalence $\approx 18\%$ and $\Prob(K_j\ge2)=0.24$.
\end{itemize} 
The pool size is $c=5$, $N\in\{1000,2000\}$, and $\Se=\Sp=0.99$ unless stated. Five dilution shapes are each considered at mild, moderate and severe levels, together with the no-dilution null $h(k)\equiv\Se$. The true curve is
\begin{equation}
h(k)=(1-\Sp)+z+(C-z)\,v(x_k),\qquad x_k=\frac{k-1}{c-1},\qquad k=1,\dots,c,
\label{eq:truecurve}
\end{equation}
with $v$ nondecreasing and $v(0)=0$, $v(1)=1$, so that $h(1)=(1-\Sp)+z$ and $h(c)=\Se$ whatever the shape: the depth $z$ fixes the sensitivity of a pool carrying a single positive, and $v$ fixes how it is recovered as $k$ grows.  The three levels are $z=0.9$, $0.7$ and $0.5$, that is $h(1)=0.91$, $0.71$ and $0.51$ against $\Se=0.99$, and the five shapes are
\begin{align*}
\text{concave: } & v(x) =x^{1/2},\qquad
\text{linear: } v(x)=x,\qquad
\text{convex: } v(x)=x^{2},\\
\text{concave--convex: } & v(x)=\begin{cases}
\tfrac12(2x)^{1/2}, & x\le\tfrac12,\\[2pt]
\tfrac12+\tfrac12(2x-1)^{2}, & x>\tfrac12,\end{cases}
\\
\text{convex--concave: } & v(x)=\begin{cases}
\tfrac12(2x)^{2}, & x\le\tfrac12,\\[2pt]
\tfrac12+\tfrac12(2x-1)^{1/2}, & x>\tfrac12.\end{cases}
\end{align*}
The sampler is run for $8,000$ iterations with a burn-in of $2,000$ and thinning by $4$, and the Bayes factor uses $4,000$ importance draws for each three-dimensional integral.

We compare the posterior mean under BADGER with the symmetric prior at $a=1/c$, the maximum likelihood estimate of the saturated model (SAT), and the maximum likelihood fits of the logistic, probit and complementary log-log submodels of \citet{Warasi2017} (LOG, PB, C)\footnote{The implementation of \citet{Warasi2017} is available at \url{https://github.com/mswarasi/RegDilution}.}. We report the mean absolute error of the fitted curve, the frequentist coverage of $95\%$ credible intervals, and the frequency with which $\log_{10}\widehat B_{10}$ exceeds $1/2$, $1$ and $2$ under $M_0$ and under each dilution scenario.

\subsection{Estimation of the Dilution Curve and the Regression Coefficients}

For the low-prevalence design at $N=2000$, \cref{tab:est} reports the mean absolute error $c^{-1}\sum_k|\widehat h(k)-h(k)|$ and \cref{tab:beta} reports the bias and standard deviation of $\widehat\beta_0$. In the supplemental material, we report $N=1000$, the moderate-prevalence design, and the remaining coefficients. \cref{fig:curves} presents the mean fitted curves at the moderate level. Three patterns are clear. First, BADGER beats SAT under every dilution scenario, because it never sits on the boundary and pulls the poorly informed increments toward equality.  Under no dilution that same pull costs it, $6.27$ against $2.55$ for SAT and $2.0$ for LOG, the price of a shape-free prior at the vertex. Second, it loses to the parametric fits only when the truth is concave, the shape closest to the logistic family, and otherwise wins by a margin growing with severity: at the severe level its error is a half to two thirds of LOG's for the linear, convex and mixed shapes. Third, its coefficients are within $\pm0.1$ of the truth except under severe level, with standard deviations roughly the same as the parametric fits'. In \cref{fig:curves}, we see that all of these differences lie in $h(2),\dots,h(4)$ since $h(1)$, informed by three quarters of the positive pools, is
estimated equally well by every method.

\begin{table}
\caption{Mean absolute error of the fitted dilution curve ($\times100$). The smallest error in each row is in bold.}
\label{tab:est}
\centering\small
\IfFileExists{tables/badger_mae_lowprev_2000.tex}{\input{tables/badger_mae_lowprev_2000}}{(generated by \texttt{badger\_main.R})}
\end{table}

\begin{table}
\caption{Bias and standard deviation of $\widehat\beta_0$ under the same designs and estimators as \cref{tab:est}. The smallest bias in absolute value in each row is in bold.}
\label{tab:beta}
\centering\small
\IfFileExists{tables/badger_beta_lowprev_2000.tex}{\input{tables/badger_beta_lowprev_2000}}{(generated by \texttt{badger\_main.R})}
\end{table}

\begin{figure}
\centering
\IfFileExists{figures/badger_curves_lowprev.pdf}{\includegraphics[width=\textwidth]{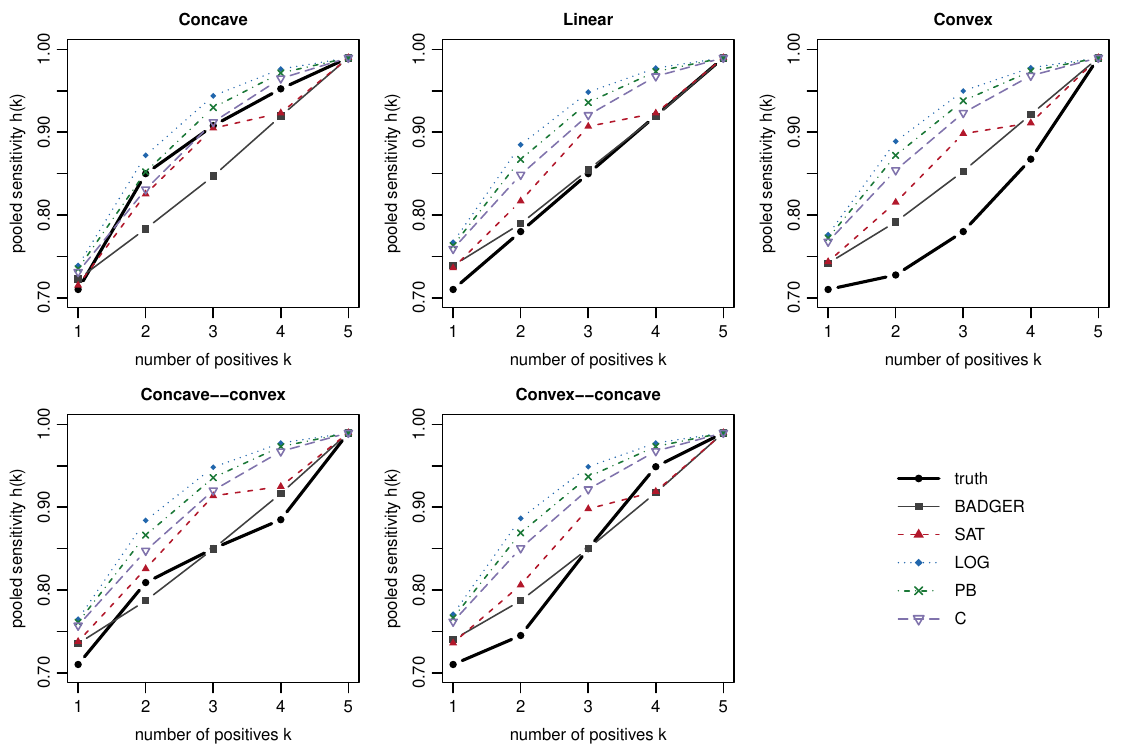}}{\fbox{generated by \texttt{badger\_main.R}}}
\caption{Mean fitted dilution curves over replications for BADGER, SAT and the parametric submodels against the true curve, five shapes at the moderate level, low-prevalence design, $c=5$, $N=2000$.}
\label{fig:curves}
\end{figure}

\subsection{Credible Intervals}

\Cref{tab:cov} reports the coverage of $95\%$ credible intervals, their average widths and the effective sample sizes. Under BADGER the intervals cover at $0.92$--$0.98$ for $h(1)$, $0.94$--$1.00$ for $h(2),\dots,h(4)$ and $0.88$--$1.00$ for $\bbeta$; the middle range is conservative because the posterior of a weakly informed increment spreads over most of its range. The exception is no dilution, where $h(k)=\Se$ lies on the boundary and no interval under a continuous prior on the simplex can reach it: coverage is zero, as it must be, and the case for dilution is the Bayes factor's rather than the interval's. Widths for $h(1)$ of $0.28$--$0.47$ repeat the message of the curve errors, that at prevalence $0.08$ the curve is identified but weakly determined. Effective sample sizes of $65$--$101$ per $1500$ retained draws for $h(1)$ suffice for means and quantiles.

\begin{table}
\caption{Coverage of $95\%$ credible intervals for $h(k)$, $k=1,\dots,4$, and for $\bbeta$. Each cell gives the coverage with the average interval width in parentheses; the last column is the effective sample size for $h(1)$ out of $1500$ retained draws.}
\label{tab:cov}
\centering\footnotesize
\IfFileExists{tables/badger_cov_lowprev_2000.tex}{\input{tables/badger_cov_lowprev_2000}}{(generated by \texttt{badger\_main.R})}
\end{table}

\subsection{Bayes Factor for Dilution}

\Cref{tab:bf} reports how often $\log_{10}\widehat B_{10}$ exceeds $1/2$, $1$ and $2$ in both designs at both sample sizes. Strong evidence is rare under no dilution, at a rate below the size of a $5\%$ test, and becomes steadily more frequent as dilution deepens. Halving the sample size weakens the evidence without changing the pattern, and in the moderate-prevalence design the evidence is decisive whenever dilution is moderate or severe.

\begin{table}
\caption{Frequency with which $\log_{10}\widehat B_{10}$ exceeds $1/2$, $1$ and $2$ under no dilution and under the five shapes at three levels, both designs, $N\in\{1000,2000\}$, $50$ replications, BADGER ($a=1/c$).}
\label{tab:bf}
\centering\small
\IfFileExists{tables/badger_bf.tex}{\input{tables/badger_bf}}{(generated by \texttt{badger\_main.R})}
\end{table}

%% file: tables/badger_mae_lowprev_2000.tex
\begin{tabular}{llccccc}
\toprule
Shape & Level & BADGER & SAT & LOG & PB & C\\
\midrule
No dilution & None & 6.27 & 2.55 & \textbf{2.00} & 2.14 & 2.26\\
\midrule
\multirow{3}{*}{Concave} & Mild & 4.25 & 5.71 & \textbf{3.26} & 3.46 & 3.68\\
 & Moderate & 6.29 & 9.92 & 4.93 & \textbf{4.67} & 4.87\\
 & Severe & 9.18 & 15.68 & 6.53 & \textbf{5.64} & 5.88\\
\midrule
\multirow{3}{*}{Linear} & Mild & \textbf{3.42} & 6.24 & 3.87 & 3.94 & 4.05\\
 & Moderate & \textbf{5.30} & 10.98 & 9.23 & 8.48 & 7.80\\
 & Severe & \textbf{7.50} & 17.70 & 14.29 & 12.51 & 10.30\\
\midrule
\multirow{3}{*}{Convex} & Mild & \textbf{3.09} & 6.91 & 5.00 & 4.99 & 5.00\\
 & Moderate & \textbf{7.55} & 12.95 & 13.77 & 12.98 & 12.13\\
 & Severe & \textbf{12.35} & 20.55 & 22.51 & 20.86 & 18.58\\
\midrule
\multirow{3}{*}{Concave--convex} & Mild & \textbf{3.49} & 6.51 & 4.09 & 4.17 & 4.28\\
 & Moderate & \textbf{5.30} & 11.49 & 9.40 & 8.72 & 8.21\\
 & Severe & \textbf{7.93} & 16.93 & 14.22 & 12.60 & 11.12\\
\midrule
\multirow{3}{*}{Convex--concave} & Mild & \textbf{3.46} & 6.34 & 3.99 & 4.07 & 4.21\\
 & Moderate & \textbf{5.91} & 11.34 & 9.34 & 8.53 & 7.77\\
 & Severe & \textbf{8.35} & 16.37 & 14.89 & 13.17 & 10.82\\
\bottomrule
\end{tabular}

%% file: tables/badger_beta_lowprev_2000.tex
\begin{tabular}{llccccc}
\toprule
Shape & Level & BADGER & SAT & LOG & PB & C\\
\midrule
No dilution & None & +0.076 (0.13) & +0.034 (0.14) & \textbf{+0.024} (0.14) & +0.026 (0.14) & +0.026 (0.14)\\
\midrule
\multirow{3}{*}{Concave} & Mild & \textbf{+0.007} (0.12) & -0.032 (0.13) & -0.051 (0.12) & -0.049 (0.13) & -0.047 (0.13)\\
 & Moderate & \textbf{-0.026} (0.19) & \textbf{-0.026} (0.19) & -0.068 (0.20) & -0.057 (0.20) & -0.045 (0.21)\\
 & Severe & \textbf{-0.037} (0.22) & \textbf{-0.037} (0.25) & -0.105 (0.20) & -0.081 (0.21) & -0.051 (0.22)\\
\midrule
\multirow{3}{*}{Linear} & Mild & \textbf{-0.002} (0.13) & -0.041 (0.14) & -0.060 (0.13) & -0.058 (0.13) & -0.056 (0.13)\\
 & Moderate & -0.076 (0.18) & \textbf{-0.073} (0.20) & -0.130 (0.18) & -0.120 (0.19) & -0.109 (0.19)\\
 & Severe & \textbf{-0.126} (0.23) & -0.135 (0.28) & -0.206 (0.23) & -0.184 (0.23) & -0.156 (0.24)\\
\midrule
\multirow{3}{*}{Convex} & Mild & \textbf{-0.005} (0.12) & -0.045 (0.13) & -0.063 (0.12) & -0.061 (0.13) & -0.059 (0.13)\\
 & Moderate & \textbf{-0.102} (0.17) & \textbf{-0.102} (0.20) & -0.162 (0.18) & -0.152 (0.18) & -0.142 (0.19)\\
 & Severe & -0.184 (0.24) & \textbf{-0.177} (0.30) & -0.273 (0.24) & -0.252 (0.25) & -0.223 (0.26)\\
\midrule
\multirow{3}{*}{Concave--convex} & Mild & \textbf{+0.005} (0.12) & -0.033 (0.13) & -0.052 (0.12) & -0.050 (0.13) & -0.048 (0.13)\\
 & Moderate & \textbf{-0.058} (0.17) & -0.065 (0.19) & -0.114 (0.18) & -0.104 (0.19) & -0.093 (0.19)\\
 & Severe & \textbf{-0.097} (0.26) & -0.100 (0.30) & -0.172 (0.25) & -0.150 (0.26) & -0.120 (0.27)\\
\midrule
\multirow{3}{*}{Convex--concave} & Mild & \textbf{-0.002} (0.12) & -0.041 (0.13) & -0.061 (0.12) & -0.059 (0.13) & -0.057 (0.13)\\
 & Moderate & -0.091 (0.18) & \textbf{-0.084} (0.20) & -0.149 (0.18) & -0.139 (0.18) & -0.128 (0.19)\\
 & Severe & -0.168 (0.23) & \textbf{-0.160} (0.28) & -0.251 (0.22) & -0.229 (0.23) & -0.201 (0.24)\\
\bottomrule
\end{tabular}

%% file: tables/badger_cov_lowprev_2000.tex
\begin{tabular}{llcccccccc}
\toprule
Shape & Level & \multicolumn{4}{c}{$h(k)$} & \multicolumn{3}{c}{$\bbeta$} & ESS\\
\cmidrule(lr){3-6}\cmidrule(lr){7-9}
 & & $k=1$ & $2$ & $3$ & $4$ & $\beta_0$ & $\beta_1$ & $\beta_2$ & $h(1)$\\
\midrule
No dilution & None & 0.00 (0.24) & 0.00 (0.23) & 0.00 (0.21) & 0.00 (0.16) & 0.94 (0.60) & 0.94 (0.82) & 0.98 (0.96) & 112\\
\midrule
\multirow{3}{*}{Concave} & Mild & 0.96 (0.28) & 1.00 (0.27) & 1.00 (0.24) & 1.00 (0.19) & 1.00 (0.65) & 0.96 (0.85) & 0.98 (1.00) & 101\\
 & Moderate & 0.98 (0.41) & 1.00 (0.44) & 1.00 (0.41) & 1.00 (0.34) & 0.96 (0.85) & 0.92 (0.96) & 0.96 (1.12) & 82\\
 & Severe & 0.98 (0.42) & 0.98 (0.57) & 0.98 (0.57) & 1.00 (0.51) & 0.98 (1.04) & 0.94 (1.12) & 1.00 (1.31) & 96\\
\midrule
\multirow{3}{*}{Linear} & Mild & 0.96 (0.29) & 1.00 (0.28) & 1.00 (0.25) & 1.00 (0.19) & 1.00 (0.65) & 0.96 (0.86) & 0.98 (1.00) & 101\\
 & Moderate & 0.96 (0.41) & 1.00 (0.43) & 1.00 (0.40) & 1.00 (0.34) & 0.96 (0.86) & 0.88 (0.97) & 0.98 (1.13) & 78\\
 & Severe & 0.96 (0.46) & 0.98 (0.57) & 1.00 (0.56) & 1.00 (0.48) & 0.96 (1.09) & 0.94 (1.14) & 0.98 (1.34) & 73\\
\midrule
\multirow{3}{*}{Convex} & Mild & 0.96 (0.29) & 1.00 (0.28) & 1.00 (0.24) & 1.00 (0.19) & 1.00 (0.66) & 0.94 (0.86) & 0.98 (1.00) & 97\\
 & Moderate & 0.96 (0.42) & 0.98 (0.44) & 1.00 (0.41) & 1.00 (0.34) & 0.98 (0.87) & 0.92 (0.98) & 0.98 (1.14) & 72\\
 & Severe & 0.92 (0.47) & 0.94 (0.56) & 0.98 (0.56) & 1.00 (0.47) & 0.92 (1.11) & 0.98 (1.15) & 0.96 (1.34) & 65\\
\midrule
\multirow{3}{*}{Concave--convex} & Mild & 0.96 (0.28) & 1.00 (0.27) & 1.00 (0.24) & 1.00 (0.20) & 1.00 (0.65) & 0.96 (0.86) & 0.98 (1.00) & 99\\
 & Moderate & 0.96 (0.41) & 1.00 (0.44) & 1.00 (0.41) & 1.00 (0.35) & 0.96 (0.86) & 0.88 (0.97) & 0.98 (1.13) & 77\\
 & Severe & 0.96 (0.45) & 1.00 (0.58) & 1.00 (0.56) & 1.00 (0.49) & 0.96 (1.08) & 0.94 (1.14) & 0.98 (1.33) & 77\\
\midrule
\multirow{3}{*}{Convex--concave} & Mild & 0.96 (0.29) & 1.00 (0.28) & 1.00 (0.25) & 1.00 (0.19) & 1.00 (0.65) & 0.94 (0.86) & 0.98 (1.00) & 98\\
 & Moderate & 0.96 (0.42) & 1.00 (0.44) & 1.00 (0.41) & 1.00 (0.34) & 0.98 (0.87) & 0.92 (0.97) & 0.98 (1.14) & 73\\
 & Severe & 0.96 (0.47) & 0.96 (0.57) & 1.00 (0.56) & 1.00 (0.48) & 0.94 (1.10) & 0.94 (1.15) & 0.98 (1.34) & 71\\
\bottomrule
\end{tabular}

%% file: tables/badger_bf.tex
\begin{tabular}{llccccccccc}
\toprule
Shape & Level & \multicolumn{3}{c}{low prev., $N=1000$} & \multicolumn{3}{c}{low prev., $N=2000$} & \multicolumn{3}{c}{moderate prev., $N=2000$}\\
\cmidrule(lr){3-5}\cmidrule(lr){6-8}\cmidrule(lr){9-11}
 & & $>\tfrac12$ & $>1$ & $>2$ & $>\tfrac12$ & $>1$ & $>2$ & $>\tfrac12$ & $>1$ & $>2$\\
\midrule
No dilution & None & 0.04 & 0.02 & 0.00 & 0.02 & 0.02 & 0.02 & 0.00 & 0.00 & 0.00\\
\midrule
\multirow{3}{*}{Concave} & Mild & 0.04 & 0.04 & 0.04 & 0.04 & 0.02 & 0.02 & 0.18 & 0.12 & 0.06\\
 & Moderate & 0.22 & 0.14 & 0.04 & 0.24 & 0.18 & 0.08 & 0.98 & 0.98 & 0.94\\
 & Severe & 0.52 & 0.44 & 0.22 & 0.76 & 0.68 & 0.42 & 1.00 & 1.00 & 1.00\\
\midrule
\multirow{3}{*}{Linear} & Mild & 0.04 & 0.04 & 0.04 & 0.04 & 0.02 & 0.02 & 0.18 & 0.10 & 0.04\\
 & Moderate & 0.20 & 0.12 & 0.00 & 0.22 & 0.12 & 0.08 & 1.00 & 1.00 & 1.00\\
 & Severe & 0.38 & 0.26 & 0.14 & 0.66 & 0.52 & 0.38 & 1.00 & 1.00 & 1.00\\
\midrule
\multirow{3}{*}{Convex} & Mild & 0.04 & 0.04 & 0.04 & 0.02 & 0.02 & 0.02 & 0.24 & 0.14 & 0.10\\
 & Moderate & 0.16 & 0.12 & 0.00 & 0.16 & 0.10 & 0.06 & 1.00 & 1.00 & 1.00\\
 & Severe & 0.30 & 0.24 & 0.20 & 0.62 & 0.52 & 0.42 & 1.00 & 1.00 & 1.00\\
\midrule
\multirow{3}{*}{Concave--convex} & Mild & 0.04 & 0.04 & 0.04 & 0.04 & 0.02 & 0.02 & 0.16 & 0.06 & 0.04\\
 & Moderate & 0.20 & 0.14 & 0.02 & 0.22 & 0.16 & 0.10 & 1.00 & 1.00 & 1.00\\
 & Severe & 0.50 & 0.40 & 0.20 & 0.74 & 0.58 & 0.44 & 1.00 & 1.00 & 1.00\\
\midrule
\multirow{3}{*}{Convex--concave} & Mild & 0.04 & 0.04 & 0.04 & 0.04 & 0.02 & 0.02 & 0.22 & 0.06 & 0.04\\
 & Moderate & 0.18 & 0.14 & 0.02 & 0.16 & 0.12 & 0.10 & 1.00 & 1.00 & 1.00\\
 & Severe & 0.34 & 0.22 & 0.14 & 0.62 & 0.58 & 0.40 & 1.00 & 1.00 & 1.00\\
\bottomrule
\end{tabular}

%% file: sections/05-application.tex
\section{Real Data Experiments}
\label{sec:app}

We illustrate BADGER with hepatitis~E virus serology from the 2017--2018 cycle of the National Health and Nutrition Examination Survey \citep{NHANES2018}. Anti-HEV IgG (variable \texttt{LBDHEG}) defines the binary response (prior exposure). The covariates are sex (\texttt{RIAGENDR}) and standardized age (\texttt{RIDAGEYR}). After restricting to complete cases, there are $N=6790$ individuals, with observed prevalence $0.079$.

Because the survey reports only individual diagnoses, we treat the observed statuses as the truth $\Ytil_{ij}$ and construct pooled data by randomly assigning individuals to pools of size $c=5$ and generating pooled and retest outcomes from a logistic dilution submodel at three dilution levels ($\lambda\in\{2.6,3.8,5\}$) and four assay settings ($\Se\in\{0.99,0.95\}$, $\Sp\in\{0.99,0.85\}$, following \citealp{MunozChimeno2024}). Generating from a logistic submodel while fitting the saturated model probes robustness to the shape assumption. This is a semi-synthetic illustration, not an analysis of naturally pooled outcomes, and we know of no public dataset with pooled outcomes, retests and covariates.

For each setting and each of $50$ random pool assignments we fit BADGER with the symmetric prior $a=1/c$, the group testing regression that ignores dilution (GT$_0$), and the individual logistic regression on the true statuses (IND) as the benchmark, and we compute $\log_{10}\widehat B_{10}$.  \Cref{tab:app} reports the posterior means of $\bbeta$, the corresponding estimates under GT$_0$, and the frequency of strong evidence for dilution. The coefficient fitted on the individual statuses is $\hat{\bbeta} = (-2.92, -0.18, 1.21)$, shown in the footnote of \Cref{tab:app}. The saturated posterior keeps the intercept close to the individual-data benchmark in every setting, whereas GT$_0$ misattributes the lost sensitivity to lower prevalence: its intercept drifts downward as dilution deepens, and further still when the assay is less sensitive. The evidence for dilution behaves as in the simulations, strengthening with severity; a lower sensitivity acts like dilution and strengthens it further, while a lower specificity obscures the effect through false positives and weakens it. In the least informative setting a few of the constructed data sets place the pooled-data estimates at the no-positives boundary and are excluded. 

\Cref{fig:app} shows, for one constructed data set under moderate dilution, the posterior mean and $95\%$ credible band of $h(k)$ with the generating curve, and the distribution of $\log_{10}\widehat B_{10}$ across the fifty assignments. The posterior mean tracks the generating curve and the band covers it at every $k$, while for that particular data set the Bayes factor leans against dilution: at this prevalence and pool size the evidence for a moderate effect is genuinely equivocal, which is what the spread of the histogram shows.

\begin{table}
\caption{Estimated coefficient under BADGER and the model that ignores dilution (GT$_0$) for NHANES hepatitis~E dataset. The number in the parenthesis in the standard deviation across assignments and the individual-data benchmark is in the footnote.}
\label{tab:app}
\centering\footnotesize
\IfFileExists{tables/badger_hev.tex}{\input{tables/badger_hev}}{(generated by \texttt{nhanes\_badger.R})}
\end{table}

\begin{figure}
\centering
\IfFileExists{figures/badger_hev.pdf}{\includegraphics[width=\textwidth]{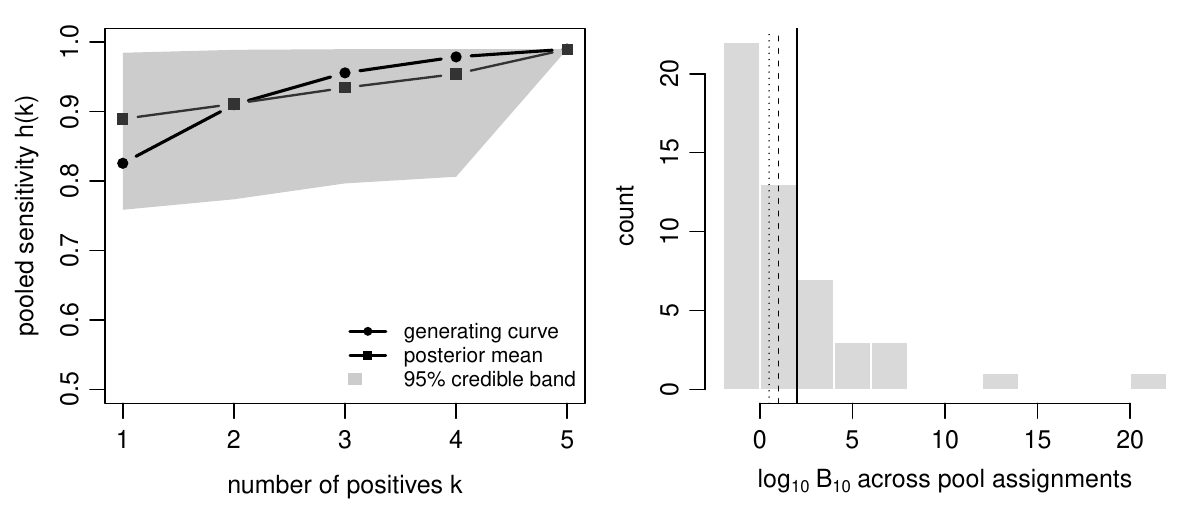}}{\fbox{generated by \texttt{nhanes\_badger.R}}}
\caption{The posterior mean and $95\%$ credible band of $h(k)$ under BADGER with the generating curve (left), and the posterior summaries of $\log_{10}B_{10}$ across pool assignments (right).}
\label{fig:app}
\end{figure}

%% file: tables/badger_hev.tex
\begin{tabular}{llccccccc}
\toprule
$\Se/\Sp$ & Level & \multicolumn{2}{c}{$\beta_0$} & \multicolumn{2}{c}{$\beta_1$ (sex)} & \multicolumn{2}{c}{$\beta_2$ (age)} & $\Prob(\log_{10}B_{10}>1)$\\
\cmidrule(lr){3-4}\cmidrule(lr){5-6}\cmidrule(lr){7-8}
 & & BADGER & GT$_0$ & BADGER & GT$_0$ & BADGER & GT$_0$ & \\
\midrule
\multirow{3}{*}{0.99/0.99} & Mild & -2.88 (0.06) & -2.98 (0.04) & -0.18 (0.04) & -0.18 (0.04) & 1.20 (0.03) & 1.19 (0.03) & 0.08\\
 & Moderate & -2.89 (0.10) & -3.10 (0.04) & -0.17 (0.04) & -0.18 (0.04) & 1.21 (0.03) & 1.19 (0.03) & 0.40\\
 & Severe & -2.88 (0.12) & -3.33 (0.07) & -0.17 (0.08) & -0.17 (0.07) & 1.21 (0.04) & 1.18 (0.04) & 0.88\\
\midrule
\multirow{3}{*}{0.95/0.99} & Mild & -2.91 (0.13) & -3.22 (0.07) & -0.18 (0.06) & -0.17 (0.05) & 1.21 (0.05) & 1.18 (0.04) & 0.52\\
 & Moderate & -2.90 (0.17) & -3.62 (0.09) & -0.17 (0.10) & -0.17 (0.09) & 1.23 (0.07) & 1.19 (0.07) & 1.00\\
 & Severe & -2.84 (0.21) & -4.17 (0.14) & -0.17 (0.14) & -0.17 (0.13) & 1.23 (0.11) & 1.19 (0.11) & 1.00\\
\midrule
\multirow{3}{*}{0.99/0.85} & Mild & -2.83 (0.10) & -2.96 (0.07) & -0.19 (0.08) & -0.20 (0.08) & 1.17 (0.06) & 1.16 (0.06) & 0.12\\
 & Moderate & -2.85 (0.15) & -3.10 (0.09) & -0.21 (0.09) & -0.21 (0.09) & 1.17 (0.08) & 1.17 (0.08) & 0.30\\
 & Severe & -2.89 (0.20) & -3.39 (0.11) & -0.20 (0.13) & -0.20 (0.13) & 1.17 (0.09) & 1.18 (0.08) & 0.68\\
\midrule
\multirow{3}{*}{0.95/0.85} & Mild & -2.90 (0.19) & -3.25 (0.11) & -0.21 (0.10) & -0.21 (0.09) & 1.19 (0.10) & 1.18 (0.09) & 0.46\\
 & Moderate & -2.92 (0.24) & -3.73 (0.16) & -0.24 (0.14) & -0.24 (0.14) & 1.19 (0.12) & 1.21 (0.12) & 0.92\\
 & Severe$^{\dagger}$ & -2.90 (0.22) & -4.43 (0.23) & -0.15 (0.23) & -0.19 (0.25) & 1.11 (0.15) & 1.21 (0.18) & 0.97\\
\bottomrule
\multicolumn{9}{@{}p{\linewidth}@{}}{\footnotesize The individual-data benchmark (IND), a logistic regression on the true statuses, is: $\widehat\bbeta = (-2.92, -0.18, 1.21)$.}\\
\multicolumn{9}{@{}p{\linewidth}@{}}{\footnotesize $^{\dagger}$18 of 50 assignments with the pooled-data estimate at the no-positives boundary excluded.}\\
\end{tabular}

%% file: sections/06-conclusion.tex
\section{Conclusion}
\label{sec:discussion}

We propose a Bayesian model, BADGER, for incorporating the dilution effect in group testing regression. BADGER places a Dirichlet prior on the increments of a saturated monotone dilution curve and obtains, from one Gibbs sampler, the posterior of the regression coefficients and of pooled sensitivity at every number of positives, credible intervals that need no boundary correction. We compute the Bayes factor as the evidence for the presence of dilution effect. Against the parametric submodels of \citet{Warasi2017}, the free curve trades precision when the shape is right for robustness when it is not; against the saturated maximum likelihood estimate it replaces a fit that may sit on the boundary of the simplex by a posterior that never does.

Its limits are set by the design rather than the prior. Information about $h(k)$ comes from pools with exactly $k$ positives, and at the prevalences that make group testing attractive such pools are rare beyond $k=2$, so the higher increments carry wide intervals whatever the prior. The model assumes a common pool size and known assay accuracy, and its cost grows with the number of configurations in a pool. Each points to an extension. Sharing one curve across mixed pool sizes, or assembling calibration pools from confirmed positives, would supply the pools with several positives that the curve needs, and both fit the sampler unchanged. Centring the Dirichlet on a parametric curve with the concentration learned from the data would shrink toward that shape, although our experiments suggest the concentration is weakly identified here. 

%% file: appendix.tex
%
\section{Derivations for the Bayesian analysis}
\label{app:A}

Notation is that of \mref{sec:model} and \mref{sec:bayes}\mainof{} and is not repeated: $\pi_j$,
$\psi_j$, $g_j$ are the pool-$j$ factors of \mref{eq:poolfac}, $q_j(k;\bbeta)$ the count
weights \mref{eq:qjk}, $q^{(i)}_j(k)$ their partial sums obeying the recursion \mref{eq:pbrec},
$C=\Se-(1-\Sp)$ and $\bm w=\blambda/C$. Set $\lambda_{c+1}=1-\Se$, so that
$\sum_{r=0}^{c+1}\lambda_r=1$, and
\begin{equation}
A_j(k)=\begin{cases}\{0,1,\dots,k\}, & Z_j=1,\\[2pt]
\{k+1,\dots,c+1\}, & Z_j=0,\end{cases}
\qquad
g_j(k)=\sum_{r\in A_j(k)}\lambda_r ,
\label{eq:Ajk}
\end{equation}
by \mref{eq:sat} and \mref{eq:oneminus}; $A_j:=A_j(K_j)$ is the index set of \mref{eq:cause}. Throughout,
$B(\bm\alpha)=\prod_r\Gamma(\alpha_r)/\Gamma(\sum_r\alpha_r)$.

\Cref{sec:closed} identifies the posterior in closed form, \cref{sec:aug}
verifies the augmentation and derives the full conditionals \mref{eq:statuspost}--\mref{eq:betacond2},
\cref{sec:exact} the forward--backward draw of the statuses, \cref{sec:grad}
the gradients used by the optimizers, \cref{sec:marglik} the marginal
likelihoods, and \cref{sec:collapse} the collapsed and hybrid samplers.

\subsection{The observed-data posterior}
\label{sec:closed}

\begin{lemma}[Affine reduction]
\label{lem:affine}
Let $Q_j(r)=\sum_{k=r}^{c}q_j(k;\bbeta)$ for $0\le r\le c$, $Q_j(c+1)=0$, and
$S_j=Q_j(0)$. Then the pool-$j$ factor of \mref{eq:poolfac} is
\begin{equation}
\ell_j(\bbeta,\bm w)=\sum_{r=0}^{c+1}c_{jr}\lambda_r,
\qquad
c_{jr}=Z_j\,Q_j(r)+(1-Z_j)\{S_j-Q_j(r)\},
\label{eq:affine}
\end{equation}
equivalently, with $\kappa_j=(1-\Sp)^{Z_j}(1-\Se)^{1-Z_j}$,
\begin{equation}
\ell_j(\bbeta,\bm w)=S_j\kappa_j+C\sum_{r=1}^{c}c_{jr}w_r .
\label{eq:affine2}
\end{equation}
\end{lemma}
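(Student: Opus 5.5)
The plan is to start from the pool factorization \mref{eq:poolfac} and group the $2^c$ configurations by their count $|\bm u|=k$. Since $g_j$ depends on $\bm u$ only through $|\bm u|$, the definition \mref{eq:qjk} of the count weights gives directly
\begin{equation*}
\ell_j(\bbeta,\bm w)=\sum_{k=0}^{c}q_j(k;\bbeta)\,g_j(k).
\end{equation*}
Next we substitute the representation \mref{eq:Ajk}, $g_j(k)=\sum_{r\in A_j(k)}\lambda_r$, which is just \mref{eq:sat} when $Z_j=1$ and \mref{eq:oneminus} when $Z_j=0$. This turns $\ell_j$ into a double sum $\sum_k\sum_{r\in A_j(k)}q_j(k)\lambda_r$, and we then interchange the order of summation so that each $\lambda_r$ collects a coefficient.

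The only real step is this interchange, which we handle case by case. For $Z_j=1$ we have $r\in A_j(k)$ iff $r\le k$, so the coefficient of $\lambda_r$ is $\sum_{k=r}^{c}q_j(k)=Q_j(r)$. This holds for $0\le r\le c$, and for $r=c+1$ it is $0=Q_j(c+1)$. For $Z_j=0$ we have $r\in A_j(k)$ iff $k\le r-1$, so the coefficient is $\sum_{k=0}^{r-1}q_j(k)=S_j-Q_j(r)$. This gives $0$ at $r=0$ and $S_j$ at $r=c+1$, consistent with $Q_j(c+1)=0$. The two cases combine into the single formula $c_{jr}=Z_jQ_j(r)+(1-Z_j)\{S_j-Q_j(r)\}$, which proves \mref{eq:affine}. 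The only care needed is at the boundary indices $r=0$ and $r=c+1$, and the conventions $Q_j(c+1)=0$ and $S_j=Q_j(0)$ are exactly what make the unified formula hold there.

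For \mref{eq:affine2} we separate out the fixed increments $\lambda_0=1-\Sp$ and $\lambda_{c+1}=1-\Se$. When $Z_j=1$, $c_{j0}=S_j$ and $c_{j,c+1}=0$, so the fixed part is $S_j(1-\Sp)$. When $Z_j=0$, $c_{j0}=0$ and $c_{j,c+1}=S_j$, so the fixed part is $S_j(1-\Se)$. In both cases this equals $S_j\kappa_j$. The remaining terms are $\sum_{r=1}^{c}c_{jr}\lambda_r=C\sum_{r=1}^{c}c_{jr}w_r$, using $\lambda_r=Cw_r$. No step here is a genuine obstacle; the whole argument is bookkeeping of summation ranges.
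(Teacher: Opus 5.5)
Your proposal is correct and matches the paper's proof step for step. You group the configurations by $|\bm u|$ to get $\sum_k q_j(k;\bbeta)g_j(k)$, expand $g_j(k)=\sum_{r\in A_j(k)}\lambda_r$, and interchange the sums with the case split on $Z_j$, including the empty sums at $r=c+1$ and $r=0$. You then separate $c_{j0}\lambda_0+c_{j,c+1}\lambda_{c+1}=S_j\kappa_j$ and substitute $\lambda_r=Cw_r$.
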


\begin{proof}
Group the $2^{c}$ configurations of \mref{eq:poolfac} by $|\bm u|$, then expand $g_j$ by
\cref{eq:Ajk} and exchange the order of summation:
\[
\ell_j
=\sum_{\bm u\in\{0,1\}^{c}}\pi_j(\bm u)\psi_j(\bm u)\,g_j(|\bm u|)
=\sum_{k=0}^{c}q_j(k;\bbeta)\,g_j(k)
=\sum_{k=0}^{c}q_j(k;\bbeta)\!\!\sum_{r\in A_j(k)}\!\!\lambda_r
=\sum_{r=0}^{c+1}\lambda_r\!\!\sum_{k\,:\,r\in A_j(k)}\!\!q_j(k;\bbeta).
\]
If $Z_j=1$ then $r\in A_j(k)\iff r\le k$, so the inner sum is
$\sum_{k=r}^{c}q_j(k;\bbeta)=Q_j(r)$, empty at $r=c+1$. If $Z_j=0$ then
$r\in A_j(k)\iff k\le r-1$, so the inner sum is
$\sum_{k=0}^{\min(r-1,c)}q_j(k;\bbeta)=S_j-Q_j(r)$, empty at $r=0$. This is
\cref{eq:affine}. At the two fixed increments the coefficients are
$c_{j0}=Z_jS_j$ and $c_{j,c+1}=(1-Z_j)S_j$, whence
$c_{j0}\lambda_0+c_{j,c+1}\lambda_{c+1}=S_j\kappa_j$; substituting
$\lambda_r=Cw_r$ for $1\le r\le c$ gives \cref{eq:affine2}.
\end{proof}

\begin{proposition}[The posterior of $\bm w$ is a Dirichlet mixture]
\label{prop:mixture}
Fix $\bbeta$. For an allocation $\bm r=(r_1,\dots,r_J)\in\{0,\dots,c+1\}^{J}$
put $n_r(\bm r)=\#\{j:r_j=r\}$, $\bm n=(n_1,\dots,n_c)$ and
$m=\sum_{r=1}^{c}n_r$. Then
\begin{equation}
\pi(\bm w\mid\bbeta,\mathcal D)
=\sum_{\bm r}\bar\omega_{\bm r}\,\mathrm{Dir}(\bm w;a\bm1+\bm n),
\qquad
\omega_{\bm r}=\Bigl\{\prod_{j=1}^{J}c_{jr_j}\Bigr\}
(1-\Sp)^{n_0}(1-\Se)^{n_{c+1}}C^{m}\,B(a\bm1+\bm n),
\label{eq:mixture}
\end{equation}
$\bar\omega_{\bm r}=\omega_{\bm r}/\sum_{\bm r'}\omega_{\bm r'}$. At most
$(c+1)^{J}$ weights are nonzero and the number of distinct components is at
most $\binom{J+c}{c}$.
\end{proposition}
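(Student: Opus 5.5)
The plan is to start from the conditional posterior \mref{eq:wcond} and replace each pool factor by its affine form from \cref{lem:affine}. For fixed $\bbeta$,
\[
\pi(\bm w\mid\bbeta,\mathcal D)\propto\prod_{r=1}^{c}w_r^{a-1}\prod_{j=1}^{J}\Bigl(\sum_{r=0}^{c+1}c_{jr}\lambda_r\Bigr).
\]
Here $\lambda_0=1-\Sp$ and $\lambda_{c+1}=1-\Se$ are constants, and $\lambda_r=Cw_r$ for $1\le r\le c$. I would expand the product of $J$ sums by distributivity, choosing one index $r_j\in\{0,\dots,c+1\}$ per pool. The term for an allocation $\bm r$ is
\[
\prod_j c_{jr_j}\lambda_{r_j}=\Bigl\{\prod_j c_{jr_j}\Bigr\}(1-\Sp)^{n_0}(1-\Se)^{n_{c+1}}C^{m}\prod_{r=1}^{c}w_r^{n_r}.
\]
This step just gathers the chosen increments by their index counts $n_r(\bm r)$.

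Next I multiply each term by the prior kernel, which gives $\prod_r w_r^{a+n_r-1}$. I then write this as $B(a\bm1+\bm n)\,\mathrm{Dir}(\bm w;a\bm1+\bm n)$, so each term becomes $\omega_{\bm r}\,\mathrm{Dir}(\bm w;a\bm1+\bm n)$ with $\omega_{\bm r}$ exactly as in \cref{eq:mixture}. The Dirichlet densities integrate to one over $\Delta_c$, so the normalising constant is $\sum_{\bm r'}\omega_{\bm r'}$ and the mixture weights are $\bar\omega_{\bm r}$. Two checks are needed here. All $\omega_{\bm r}\ge0$ because each $c_{jr}$ is a sum of nonnegative $q_j(k;\bbeta)$. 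The total $\sum_{\bm r'}\omega_{\bm r'}$ is positive because it equals $\int\prod_j\ell_j\,\mathrm{Dir}(\bm w;a\bm1)\,d\bm w\cdot B(a\bm1)$, and this is positive provided the likelihood is not identically zero.

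For the counting claims, I would use the restrictions on $r_j$. If $Z_j=1$ then $c_{j,c+1}=Q_j(c+1)=0$. If $Z_j=0$ then $c_{j0}=S_j-Q_j(0)=0$. So each pool has at most $c+1$ admissible indices, and at most $(c+1)^J$ allocations have nonzero weight.

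Distinct components are indexed only by $\bm n=(n_1,\dots,n_c)$ with $\sum_{r=1}^c n_r=m\le J$. The number of nonnegative integer vectors of length $c$ with sum at most $J$ equals the number of length-$(c+1)$ vectors with sum exactly $J$, obtained by adding a slack coordinate. By stars and bars this is $\binom{J+c}{c}$. Allocations sharing the same $\bm n$ are merged by summing their weights.

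I expect no real obstacle, since this is bookkeeping. The only care points are confirming the vanishing coefficients $c_{j0}$ and $c_{j,c+1}$, so that the $(c+1)^J$ bound holds, and stating that components with a common $\bm n$ coincide, so that the $\binom{J+c}{c}$ bound is about distinct densities rather than allocations.
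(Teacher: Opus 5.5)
Your proposal is correct and follows the paper's proof essentially step for step. You substitute the affine form of \cref{lem:affine} into \mref{eq:wcond}, expand the product of $J$ sums over allocations, collect the chosen increments into $\prod_r w_r^{n_r}$, normalise each kernel by $B(a\bm1+\bm n)$, and use $c_{j,c+1}=0$ for $Z_j=1$ and $c_{j0}=0$ for $Z_j=0$ for the counting bounds; your extra checks (nonnegativity of the $\omega_{\bm r}$, positivity of their sum as $B(a\bm1)$ times the prior-averaged likelihood, and the slack-coordinate count giving $\binom{J+c}{c}$) are correct details that the paper leaves implicit.
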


\begin{proof}
By \mref{eq:wcond} and \cref{lem:affine},
$\pi(\bm w\mid\bbeta,\mathcal D)\propto\mathrm{Dir}(\bm w;a\bm1)
\prod_{j}\sum_{r=0}^{c+1}c_{jr}\lambda_r$. Expanding the product of $J$ sums,
one term for each choice of a summand in every pool,
\[
\prod_{j=1}^{J}\sum_{r=0}^{c+1}c_{jr}\lambda_r
=\sum_{\bm r}\prod_{j=1}^{J}c_{jr_j}\lambda_{r_j},
\qquad
\prod_{j=1}^{J}\lambda_{r_j}=\prod_{r=0}^{c+1}\lambda_r^{n_r}
=(1-\Sp)^{n_0}(1-\Se)^{n_{c+1}}C^{m}\prod_{r=1}^{c}w_r^{n_r},
\]
the second identity by collecting equal factors and substituting
$\lambda_0=1-\Sp$, $\lambda_{c+1}=1-\Se$, $\lambda_r=Cw_r$. Multiplying by
$\mathrm{Dir}(\bm w;a\bm1)\propto\prod_{r=1}^{c}w_r^{a-1}$ gives, for each
$\bm r$, the kernel $\prod_{r=1}^{c}w_r^{a+n_r-1}$, whose integral over
$\Delta_c$ is $B(a\bm1+\bm n)$; normalizing term by term gives
\cref{eq:mixture}. By \cref{eq:affine} $c_{j,c+1}=0$ when $Z_j=1$ and
$c_{j0}=0$ when $Z_j=0$, so each factor contributes at most $c+1$ nonzero
summands. Distinct components correspond to distinct $\bm n$ with
$\sum_{r=1}^{c}n_r\le J$, of which there are $\binom{J+c}{c}$.
\end{proof}

\begin{remark}
\cref{eq:mixture} is a closed form of no computational use: at $c=5$,
$J=400$ it has $6^{400}$ terms. Its value is structural. Each $\bm r$ selects
one increment per pool, and the augmentation of \cref{sec:aug} makes that
selection a random variable: $R_j=r_j$. Sampling the allocation replaces the
sum over $\bm r$ by a draw from it, and \cref{eq:mixture} shows that the
conditional given the allocation is exactly $\mathrm{Dir}(a\bm1+\bm n)$.
\end{remark}

\subsection{The augmentation and the full conditionals}
\label{sec:aug}

The augmented joint density of $(\bbeta,\bm w,\widetilde\bY,\bm R)$ and the
data is
\begin{equation}
p(\bbeta,\bm w,\widetilde\bY,\bm R,\mathcal D)
=\pi(\bbeta)\,\mathrm{Dir}(\bm w;a\bm1)
\prod_{j=1}^{J}\pi_j(\widetilde\bY_j)\,\psi_j(\widetilde\bY_j)\,
\lambda_{R_j}\mathbf 1\{R_j\in A_j(K_j)\},
\qquad K_j=|\widetilde\bY_j| .
\label{eq:augjoint}
\end{equation}

\begin{lemma}[The augmentation changes nothing]
\label{lem:valid}
$(\mathrm{i})$ $p(Z_j=z,R_j=r\mid K_j,\blambda)=\lambda_r\mathbf 1\{r\in
A^z_j(K_j)\}$ is a probability mass function on
$\{0,1\}\times\{0,\dots,c+1\}$, where $A^z_j$ is \cref{eq:Ajk} with $Z_j$
replaced by $z$; $(\mathrm{ii})$ summing it over $r$ returns $g_j(K_j)$;
$(\mathrm{iii})$ summing \cref{eq:augjoint} over $\bm R$ and $\widetilde\bY$
returns $\pi(\bbeta)\mathrm{Dir}(\bm w;a\bm1)L(\bbeta,\bm w\mid\mathcal D)$.
\end{lemma}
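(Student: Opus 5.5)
Part (i) is a direct computation on the extended increments. Fix $K_j=k$. For $z=1$ the admissible causes are $A^1_j(k)=\{0,\dots,k\}$, and for $z=0$ they are $A^0_j(k)=\{k+1,\dots,c+1\}$. These two sets partition $\{0,\dots,c+1\}$, and every $\lambda_r\ge0$ because $\lambda_0=1-\Sp$, $\lambda_{c+1}=1-\Se$ and $\lambda_r=Cw_r$ with $C>0$. Summing $\lambda_r\mathbf 1\{r\in A^z_j(k)\}$ over both $z$ and $r$ therefore counts each index exactly once and gives $\sum_{r=0}^{c+1}\lambda_r=1$, by the normalization stated after \mref{eq:sat}. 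So the function is a probability mass function. For (ii), summing over $r$ alone at fixed $z$ gives $\sum_{r\in A^z_j(k)}\lambda_r$. This equals $h(k)$ when $z=1$, by \mref{eq:sat}, and $1-h(k)$ when $z=0$, by \mref{eq:oneminus}. That is $h(k)^z\{1-h(k)\}^{1-z}=g_j(k)$, which is exactly \cref{eq:Ajk}. In particular the marginal of $Z_j$ given $K_j$ is the original Bernoulli$(h(K_j))$ model, including the case $k=0$, where $g_j(0)$ reproduces the $I_{j0}$ factor of $T_3$.

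For (iii), I would first sum \cref{eq:augjoint} over $\bm R$. Only the last factor of each pool depends on $R_j$, and the pools do not share any $R_j$. The sum over $\bm R\in\{0,\dots,c+1\}^J$ is over a Cartesian product, so it factorizes into $\prod_j\sum_{r}\lambda_r\mathbf 1\{r\in A_j(K_j)\}$. By (ii) this equals $\prod_j g_j(K_j)$. The result is $\pi(\bbeta)\mathrm{Dir}(\bm w;a\bm1)\prod_j\pi_j(\widetilde\bY_j)\psi_j(\widetilde\bY_j)g_j(|\widetilde\bY_j|)$.

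I would then sum over $\widetilde\bY\in\{0,1\}^n$. The same block-product argument used to obtain \mref{eq:poolfac} applies: $\{0,1\}^n$ is the product of the pool blocks, and each factor depends only on its own block. The sum therefore splits into $\prod_j\ell_j(\bbeta,\bm w)$, which \mref{eq:poolfac} identifies with $L(\bbeta,\bm w\mid\mathcal D)$.

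The only real obstacle is bookkeeping rather than analysis. I need to check that the three pool factors $\pi_j,\psi_j,g_j$ of \mref{eq:poolfac} reproduce $T_1T_2T_3$ exactly, including the $k=0$ term absorbed from $T_3$ and the retest factor $\psi_j$, which appears only when $Z_j=1$. I would rely on the identification already made in the text after \mref{eq:poolfac} rather than redo it.
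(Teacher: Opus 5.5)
Your proposal is correct and matches the paper's proof: (i) uses the partition of $\{0,\dots,c+1\}$ into $A^1_j(K_j)$ and $A^0_j(K_j)$ together with $\sum_{r=0}^{c+1}\lambda_r=1$, (ii) is the identity $g_j(k)=\sum_{r\in A_j(k)}\lambda_r$, and (iii) sums out $\bm R$ by (ii) and then $\widetilde\bY$ pool by pool to recover $\prod_j\ell_j(\bbeta,\bm w)$. Your explicit check that every $\lambda_r\ge 0$, and your Cartesian-product justification for factorizing the sum over $\bm R$, are small details the paper leaves implicit.
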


\begin{proof}
$(\mathrm{ii})$ is \cref{eq:Ajk}. For $(\mathrm{i})$, $A^1_j(K_j)$ and
$A^0_j(K_j)$ partition $\{0,\dots,c+1\}$, so
\[
\sum_{z\in\{0,1\}}\sum_{r=0}^{c+1}\lambda_r\mathbf 1\{r\in A^z_j(K_j)\}
=\sum_{r=0}^{K_j}\lambda_r+\sum_{r=K_j+1}^{c+1}\lambda_r
=h(K_j)+\{1-h(K_j)\}=1 .
\]
For $(\mathrm{iii})$, sum \cref{eq:augjoint} over $R_j$ using
$(\mathrm{ii})$ and then over $\widetilde\bY_j\in\{0,1\}^{c}$, pool by pool:
\[
\sum_{\widetilde\bY}\sum_{\bm R}(\cdot)
=\pi(\bbeta)\mathrm{Dir}(\bm w;a\bm1)\prod_{j=1}^{J}
\sum_{\bm u\in\{0,1\}^{c}}\pi_j(\bm u)\psi_j(\bm u)g_j(|\bm u|)
=\pi(\bbeta)\mathrm{Dir}(\bm w;a\bm1)\prod_{j=1}^{J}\ell_j(\bbeta,\bm w),
\]
which is the posterior kernel by \mref{eq:poolfac}.
\end{proof}

\begin{proposition}[Full conditionals]
\label{prop:conditionals}
Under \cref{eq:augjoint}, with $n_r=\#\{j:R_j=r\}$ and
$\bm n=(n_1,\dots,n_c)$:
\begin{align}
\bm w\mid\bbeta,\widetilde\bY,\bm R,\mathcal D
&\ \sim\ \mathrm{Dirichlet}(a\bm1+\bm n),
\label{eq:c-w}\\
\Prob(R_j=r\mid\bbeta,\bm w,\widetilde\bY,\mathcal D)
&\ =\ \lambda_r/g_j(K_j),\qquad r\in A_j(K_j),
\label{eq:c-R}\\
\Prob(\widetilde\bY_j=\bm u\mid\bbeta,\bm w,\mathcal D)
&\ \propto\ \pi_j(\bm u)\psi_j(\bm u)g_j(|\bm u|),
\qquad
\Prob(K_j=k\mid\bbeta,\bm w,\mathcal D)\ \propto\ q_j(k;\bbeta)g_j(k),
\label{eq:c-Y}\\
p(\bbeta\mid\bm w,\widetilde\bY,\bm R,\mathcal D)
&\ \propto\ \pi(\bbeta)\prod_{j=1}^{J}\prod_{i=1}^{c}
p_{ij}^{\Ytil_{ij}}(1-p_{ij})^{1-\Ytil_{ij}} .
\label{eq:c-beta}
\end{align}
\end{proposition}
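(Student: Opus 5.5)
The plan is to read each full conditional directly off the augmented joint \cref{eq:augjoint}, holding all other quantities fixed and keeping only the factors that involve the variable being updated. I will treat the four conditionals in the order $\bm w$, $\bm R$, $\widetilde\bY$, $\bbeta$. The one step that needs a marginalization rather than inspection is \cref{eq:c-Y}. Throughout, I write $\lambda_r = Cw_r$ for $1\le r\le c$; the end increments $\lambda_0 = 1-\Sp$ and $\lambda_{c+1} = 1-\Se$ are constants.

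\emph{The conditional of $\bm w$.} Given $(\bbeta,\widetilde\bY,\bm R)$, the only factors of \cref{eq:augjoint} that depend on $\bm w$ are $\mathrm{Dir}(\bm w;a\bm1)\propto\prod_{r=1}^c w_r^{a-1}$ and $\prod_j\lambda_{R_j}$. The indicators $\mathbf 1\{R_j\in A_j(K_j)\}$ are free of $\bm w$. Collecting equal factors exactly as in the proof of \cref{prop:mixture}, $\prod_j\lambda_{R_j}$ is a constant times $\prod_{r=1}^c w_r^{n_r}$. Multiplying by the prior kernel gives $\prod_r w_r^{a+n_r-1}$ on $\Delta_c$, which is \cref{eq:c-w}. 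Pools with $R_j\in\{0,c+1\}$ contribute only constants, which is why $\bm n$ counts only $r=1,\dots,c$.

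\emph{The conditional of $\bm R$.} Given $(\bbeta,\bm w,\widetilde\bY)$, \cref{eq:augjoint} factorizes over pools in $R_j$, so the $R_j$ are conditionally independent. Each $R_j$ has mass proportional to $\lambda_r\mathbf 1\{r\in A_j(K_j)\}$. By \cref{lem:valid}(ii) the normalizer is $g_j(K_j)$, which gives \cref{eq:c-R}. The normalizer is positive, because $A_j(K_j)$ always contains $0$ or $c+1$, whose increments are positive when $\Se<1$ and $\Sp<1$; when either equals $1$, I must check that the observed data rule out $g_j(K_j)=0$.

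\emph{The conditional of $\widetilde\bY$.} Here the statement conditions on $(\bbeta,\bm w)$ only, with $\bm R$ integrated out; this is the blocked step of the sampler. I would sum \cref{eq:augjoint} over $\bm R$ using \cref{lem:valid}(ii), which replaces $\lambda_{R_j}\mathbf 1\{\cdot\}$ by $g_j(K_j)$. The result is a product over pools of $\pi_j(\widetilde\bY_j)\psi_j(\widetilde\bY_j)g_j(|\widetilde\bY_j|)$, giving the first half of \cref{eq:c-Y}. Summing this over $\{\bm u:|\bm u|=k\}$ and using definition \cref{eq:qjk} gives the law of $K_j$.

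\emph{The conditional of $\bbeta$.} Given $(\bm w,\widetilde\bY,\bm R)$, $\bbeta$ enters \cref{eq:augjoint} only through $\pi(\bbeta)\prod_j\pi_j(\widetilde\bY_j)$, since $\psi_j$ and the $\lambda$ factors are free of $\bbeta$. This is \cref{eq:c-beta}.

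I expect the main care point to be \cref{eq:c-Y}: making explicit that it is a marginal over $\bm R$, so that together with \cref{eq:c-R} it forms a valid blocked Gibbs update, namely a draw from $p(\widetilde\bY\mid\bbeta,\bm w)$ followed by a draw from $p(\bm R\mid\widetilde\bY,\bbeta,\bm w)$.
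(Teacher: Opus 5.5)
Your proposal is correct and follows the paper's proof essentially step for step. Like the paper, you read each conditional off the augmented joint \cref{eq:augjoint}, obtain \cref{eq:c-Y} by summing out $\bm R$ via \cref{lem:valid}(ii), and treat it together with \cref{eq:c-R} as a blocked draw of $(\widetilde\bY_j,R_j)$; your extra check that $g_j(K_j)>0$ is a harmless addition the paper leaves implicit, since configurations with $g_j(|\bm u|)=0$ already receive zero mass under \cref{eq:c-Y}.
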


\begin{proof}
\emph{\cref{eq:c-w}.} In \cref{eq:augjoint} the factors $\pi(\bbeta)$,
$\pi_j(\widetilde\bY_j)$ and $\psi_j(\widetilde\bY_j)$ are free of $\bm w$,
and the indicator is a constant once $K_j$ and $Z_j$ are conditioned on, since
$A_j(K_j)$ depends on neither $\bm w$ nor $R_j$'s value. Hence
\begin{align*}
p(\bm w\mid\bbeta,\widetilde\bY,\bm R=\bm r,\mathcal D)
 & \propto \mathrm{Dir}(\bm w;a\bm1)\prod_{j=1}^{J}\lambda_{r_j}
=\mathrm{Dir}(\bm w;a\bm1)\prod_{r=0}^{c+1}\lambda_r^{n_r}
\\
& \propto
\prod_{r=1}^{c}w_r^{\,a-1}\cdot
\underbrace{(1-\Sp)^{n_0}(1-\Se)^{n_{c+1}}C^{m}}_{\text{free of }\bm w}
\prod_{r=1}^{c}w_r^{\,n_r},
\end{align*}
so the kernel is $\prod_{r=1}^{c}w_r^{\,a+n_r-1}$, that of
$\mathrm{Dirichlet}(a\bm1+\bm n)$, with normalizing constant
$1/B(a\bm1+\bm n)$ and mean $(a\bm1+\bm n)/(ca+m)$. The conditional is proper
for every realization because $a>0$ and $n_r\ge0$.

\emph{\cref{eq:c-R}.} The only factor of \cref{eq:augjoint} containing
$R_j$ is $\lambda_{R_j}\mathbf 1\{R_j\in A_j(K_j)\}$; normalizing over
$r\in A_j(K_j)$ divides by $\sum_{r\in A_j(K_j)}\lambda_r=g_j(K_j)$.

\emph{\cref{eq:c-Y}.} Conditioning $\widetilde\bY_j$ on $\bm R$ would leave
the indicator $\mathbf 1\{R_j\in A_j(|\bm u|)\}$, which depends on $\bm u$;
we therefore draw the block $(\widetilde\bY_j,R_j)$ jointly, as
$p(\widetilde\bY_j\mid\bbeta,\bm w,\mathcal D)\,
p(R_j\mid\widetilde\bY_j,\bbeta,\bm w,\mathcal D)$. Summing
\cref{eq:augjoint} over $R_j$ by \cref{lem:valid}$(\mathrm{ii})$ leaves the
pool-$j$ factor $\pi_j(\bm u)\psi_j(\bm u)g_j(|\bm u|)$, which is the first
statement; summing it over $\{\bm u:|\bm u|=k\}$ and using \mref{eq:qjk},
$\sum_{|\bm u|=k}\pi_j(\bm u)\psi_j(\bm u)=q_j(k;\bbeta)$, gives the second.
The second factor of the block is \cref{eq:c-R}.

\emph{\cref{eq:c-beta}.} In \cref{eq:augjoint} the only factors containing
$\bbeta$ are $\pi(\bbeta)$ and $\prod_j\pi_j(\widetilde\bY_j)$; $\psi_j$
involves only $\Se$, $\Sp$ and the data, and $\lambda_{R_j}$ only $\bm w$.
\end{proof}

\begin{remark}[Order of the sweep]
\label{rem:order}
\cref{eq:c-Y} is a marginal, not a full, conditional: it is
$p(\widetilde\bY_j\mid\bbeta,\bm w,\mathcal D)$ with $R_j$ integrated out.
Drawing $\widetilde\bY$ from it and then $\bm R$ from \cref{eq:c-R} is an
exact draw from $p(\widetilde\bY,\bm R\mid\bbeta,\bm w,\mathcal D)$ by the
chain rule, hence an ordinary block Gibbs update; the pools are conditionally
independent, so the block factorizes over $j$. The reverse order is not
equivalent, and this is the only place where an ordering constraint of the
kind discussed by \citet{vanDykPark2008} arises.
\end{remark}

\begin{remark}[Each augmentation removes one sum]
Conditioning on nothing but $\bbeta$ leaves both sums,
$\pi(\bm w\mid\bbeta,\mathcal D)\propto\mathrm{Dir}(\bm w;a\bm1)
\prod_j\{\sum_kq_j(k;\bbeta)\sum_{r\in A_j(k)}\lambda_r\}$. Conditioning on
$\widetilde\bY$ collapses the first to its term at $\bm u=\widetilde\bY_j$,
leaving $\mathrm{Dir}(\bm w;a\bm1)\prod_jg_j(K_j)$, still a product of $J$
affine forms in $\bm w$. Conditioning on $\bm R$ collapses the second,
leaving the monomial $\prod_j\lambda_{R_j}$. Neither suffices alone: without
$\widetilde\bY$ the sets $A_j$ are undefined, and without $\bm R$ the
conditional is a polynomial of degree $J$.
\end{remark}

\subsection{Exactness of the forward--backward draw}
\label{sec:exact}

\begin{lemma}
\label{lem:fb}
Fix $j$ and $k$ with $q_j(k;\bbeta)>0$. Set $k_c=k$ and, for $i=c,c-1,\dots,1$,
draw
\begin{equation}
\Prob(u_i=1)=\frac{b_{ij}\,q^{(i-1)}_j(k_i-1)}{q^{(i)}_j(k_i)},
\qquad k_{i-1}=k_i-u_i .
\label{eq:backward}
\end{equation}
Then $\bm u=(u_1,\dots,u_c)$ has law
$\Prob(\bm u)=\pi_j(\bm u)\psi_j(\bm u)\mathbf 1\{|\bm u|=k\}/q_j(k;\bbeta)$.
Consequently, drawing $K_j=k$ from \cref{eq:c-Y} and then $\bm u$ from
\cref{eq:backward} is an exact draw of $\widetilde\bY_j$ from
\cref{eq:c-Y}, at cost $O(c^{2})$ per pool against $O(c\,2^{c})$ for a
categorical draw over all configurations.
\end{lemma}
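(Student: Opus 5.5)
The plan is to compute the probability of a given output $\bm u$ as a telescoping product of the backward ratios in \cref{eq:backward}, and then read off the claimed law.

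\emph{Well-posedness.} First I check that each draw is a valid Bernoulli. By the recursion \mref{eq:pbrec},
\[
q^{(i)}_j(k_i)=a_{ij}q^{(i-1)}_j(k_i)+b_{ij}q^{(i-1)}_j(k_i-1),
\]
so the two candidate probabilities
\[
b_{ij}q^{(i-1)}_j(k_i-1)/q^{(i)}_j(k_i),\qquad a_{ij}q^{(i-1)}_j(k_i)/q^{(i)}_j(k_i)
\]
are nonnegative and sum to one. This needs $q^{(i)}_j(k_i)>0$ at every visited state. I will prove that by backward induction from $q^{(c)}_j(k)=q_j(k;\bbeta)>0$. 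A branch $u_i$ is chosen with positive probability only if the corresponding term is positive, so $q^{(i-1)}_j(k_{i-1})>0$ almost surely.

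\emph{Telescoping product.} Fix $\bm u\in\{0,1\}^c$ with $|\bm u|=k$, and set $k_i=\sum_{l\le i}u_l$. Then $k_c=k$ and $k_{i-1}=k_i-u_i$, so the path of states is determined by $\bm u$. The probability of the realized step $i$ is
\[
a_{ij}^{1-u_i}b_{ij}^{u_i}\,\frac{q^{(i-1)}_j(k_{i-1})}{q^{(i)}_j(k_i)}.
\]
Multiplying over $i=c,\dots,1$, the ratios telescope to
\[
\frac{q^{(0)}_j(k_0)}{q^{(c)}_j(k)}.
\]
Here $k_0=0$ automatically, and $q^{(0)}_j(0)=1$. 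The product is therefore
\[
\prod_i a_{ij}^{1-u_i}b_{ij}^{u_i}\big/q_j(k;\bbeta)=\pi_j(\bm u)\psi_j(\bm u)/q_j(k;\bbeta),
\]
using the identity stated before \mref{eq:pbrec}.

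Vectors with $|\bm u|\ne k$ are never produced. The procedure cannot end at $k_0\ne0$, because any branch leading to an infeasible state carries factor $q^{(i-1)}_j(\cdot)=0$ and so has probability zero; in particular $q^{(i)}_j(m)=0$ for $m>i$ or $m<0$. This yields the indicator $\mathbf 1\{|\bm u|=k\}$.

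\emph{Consequence for the block draw.} The joint law of $(K_j,\bm u)$ is
\[
\frac{q_j(k)g_j(k)}{\sum_{k'}q_j(k')g_j(k')}\cdot\frac{\pi_j(\bm u)\psi_j(\bm u)\mathbf 1\{|\bm u|=k\}}{q_j(k)},
\]
and the factors $q_j(k)$ cancel. Summing over $k$ gives
\[
\pi_j(\bm u)\psi_j(\bm u)g_j(|\bm u|)
\]
up to normalization, which is the first statement of \cref{eq:c-Y}. Pools with $q_j(k)=0$ receive zero probability in the $K_j$ draw, so the hypothesis of the lemma holds whenever it is invoked.

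\emph{Cost.} Building the table $q^{(i)}_j(k)$ for $0\le k\le i\le c$ takes $O(c^2)$ operations, and the backward pass takes $O(c)$. Enumerating all $2^c$ configurations, each with a product of $c$ factors, costs $O(c2^c)$.

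The only delicate point is the zero-probability bookkeeping: division by $q^{(i)}_j(k_i)$ must be well defined along every realized path, and infeasible $\bm u$ must get mass zero. I handle both through the positivity induction and the zero-extension convention for $q^{(i)}_j$ outside $0\le k\le i$.
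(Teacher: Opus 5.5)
Your proposal is correct and follows the paper's proof. Both use the recursion to show each backward step is a valid Bernoulli, both telescope the step probabilities $a_{ij}^{1-u_i}b_{ij}^{u_i}q^{(i-1)}_j(k_{i-1})/q^{(i)}_j(k_i)$ to $\pi_j(\bm u)\psi_j(\bm u)/q_j(k;\bbeta)$ via $q^{(0)}_j(0)=1$, and both multiply by the $K_j$ marginal to recover the block conditional. Your positivity induction along realized paths is slightly more careful about division by zero and the $|\bm u|\ne k$ case than the paper, which handles the latter simply through the factor $q^{(0)}_j(k_0)=\mathbf 1\{k_0=0\}$.
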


\begin{proof}
By the recursion \mref{eq:pbrec}, $q^{(i)}_j(k_i)=a_{ij}q^{(i-1)}_j(k_i)+
b_{ij}q^{(i-1)}_j(k_i-1)$, so \cref{eq:backward} is a probability and its
complement is $\Prob(u_i=0)=a_{ij}q^{(i-1)}_j(k_i)/q^{(i)}_j(k_i)$. Both
cases are covered by
\[
\Prob(u_i\mid k_i)=a_{ij}^{1-u_i}b_{ij}^{u_i}\,
\frac{q^{(i-1)}_j(k_i-u_i)}{q^{(i)}_j(k_i)}
=a_{ij}^{1-u_i}b_{ij}^{u_i}\,
\frac{q^{(i-1)}_j(k_{i-1})}{q^{(i)}_j(k_i)} .
\]
Multiplying over $i=c,\dots,1$ telescopes the ratio:
\[
\Prob(\bm u)=\prod_{i=1}^{c}a_{ij}^{1-u_i}b_{ij}^{u_i}
\cdot\frac{q^{(0)}_j(k_0)}{q^{(c)}_j(k_c)}
=\frac{\pi_j(\bm u)\psi_j(\bm u)\,\mathbf 1\{k_0=0\}}{q_j(k;\bbeta)},
\]
using $\pi_j(\bm u)\psi_j(\bm u)=\prod_ia_{ij}^{1-u_i}b_{ij}^{u_i}$,
$q^{(0)}_j(0)=1$ and $q^{(0)}_j(k_0)=0$ for $k_0\ne0$. Since
$k_0=k-\sum_iu_i$, the indicator is $\mathbf 1\{|\bm u|=k\}$. Multiplying by
$\Prob(K_j=k)\propto q_j(k;\bbeta)g_j(k)$ gives
$\Prob(\widetilde\bY_j=\bm u)\propto\pi_j(\bm u)\psi_j(\bm u)g_j(|\bm u|)$,
which is \cref{eq:c-Y}. The forward recursion \mref{eq:pbrec} costs $O(c^{2})$ and the
backward pass $O(c)$.
\end{proof}

\subsection{The posterior mean curve}
\label{sec:mean}

\begin{proposition}
\label{prop:interior}
For every $r$, $\E(w_r\mid\mathcal D)>0$, and the posterior mean curve is
strictly increasing:
$\E\{h(k)\mid\mathcal D\}-\E\{h(k-1)\mid\mathcal D\}=C\,\E(w_k\mid\mathcal D)>0$
for $k=1,\dots,c$. In particular the posterior mean never lies on the
boundary of $C\Delta_c$.
\end{proposition}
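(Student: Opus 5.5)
The plan is to derive the result from \cref{prop:mixture} by averaging over $\bbeta$. Since $h(k)-h(k-1)=\lambda_k=Cw_k$ by \mref{eq:sat}, linearity of expectation reduces the increment identity to $\E\{h(k)\mid\mathcal D\}-\E\{h(k-1)\mid\mathcal D\}=C\,\E(w_k\mid\mathcal D)$. This holds for $k=1,\dots,c$; at $k=1$ we use $h(0)=1-\Sp$, which is deterministic. It therefore suffices to show $\E(w_r\mid\mathcal D)>0$ for every $r=1,\dots,c$. Strict monotonicity then follows, and so does the boundary claim: the mean increment vector $C\,\E(\bm w\mid\mathcal D)$ has all coordinates strictly positive, so it lies in the relative interior of $C\Delta_c$.

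For positivity I would first condition on $\bbeta$. By \cref{prop:mixture}, $\pi(\bm w\mid\bbeta,\mathcal D)$ is a finite mixture of $\mathrm{Dirichlet}(a\bm1+\bm n)$ laws with nonnegative weights $\bar\omega_{\bm r}$ summing to one. The mean of a Dirichlet component is $(a+n_r)/(ca+m)$, and this is at least $a/(ca+J)>0$ because $n_r\ge0$, $a>0$ and $m\le J$. Hence
\[
\E(w_r\mid\bbeta,\mathcal D)=\sum_{\bm r}\bar\omega_{\bm r}\,\frac{a+n_r(\bm r)}{ca+m(\bm r)}\ \ge\ \frac{a}{ca+J},
\]
a bound uniform in $\bbeta$. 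The tower property then gives $\E(w_r\mid\mathcal D)=\E\{\E(w_r\mid\bbeta,\mathcal D)\mid\mathcal D\}\ge a/(ca+J)>0$.

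The main obstacle is making sure the mixture representation is well defined for ($\pi$-almost) every $\bbeta$, that is, $\sum_{\bm r}\omega_{\bm r}>0$. This is equivalent to $\prod_j\ell_j(\bbeta,\bm w)$ not vanishing identically in $\bm w$, and it also covers the properness of the joint posterior, which is implicitly assumed. I would argue this from \cref{lem:affine}. We have $\ell_j=S_j\kappa_j+C\sum_r c_{jr}w_r$ with $S_j>0$, because every $a_{ij},b_{ij}>0$ when $p_{ij}\in(0,1)$ and $\Se,\Sp\in(0,1)$. Moreover $\kappa_j>0$ provided $\Sp<1$ and $\Se<1$. So each $\ell_j$ is strictly positive on the simplex, the normalizer is positive, and the conditional mean is well defined. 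If $\Se$ or $\Sp$ equals one, I would instead note that the observed-data likelihood is assumed positive at some $\bm w$, and positivity of a product of affine functions at one point propagates to an open set. Alternatively, one could avoid the issue by working directly with the augmented chain: conditionally on $(\widetilde\bY,\bm R)$, \cref{eq:c-w} gives mean $(a+n_r)/(ca+m)\ge a/(ca+J)$, and iterated expectation over the augmented posterior (valid by \cref{lem:valid}(iii)) yields the same bound without invoking the mixture at all. I would present this version as the cleaner one.
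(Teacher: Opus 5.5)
Your argument is correct, and the version you say you would present is exactly the paper's proof: condition on $(\widetilde\bY,\bm R)$, bound the Dirichlet mean from \cref{eq:c-w} by $(a+n_r)/(ca+m)\ge a/(ca+J)$, apply the tower property, and read off the increments $h(k)-h(k-1)=Cw_k$ from \cref{eq:sat}. Your mixture route through \cref{prop:mixture} gives the same bound with $\widetilde\bY$ summed out, since the weights $\bar\omega_{\bm r}$ are the conditional law of $\bm R$ given $\bbeta$. Your checks on the positivity of the normalizer and your appeal to \cref{lem:valid}$(\mathrm{iii})$ make explicit what the paper leaves implicit.
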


\begin{proof}
By \cref{eq:c-w} and the tower property,
$\E(w_r\mid\mathcal D)=\E\{(a+n_r)/(ca+m)\mid\mathcal D\}\ge a/(ca+J)>0$,
since $a>0$, $n_r\ge0$ and $m\le J$. The increment identity is
$h(k)-h(k-1)=\lambda_k=Cw_k$ from \mref{eq:sat}.
\end{proof}

\begin{remark}
The bound $\E(w_r\mid\mathcal D)\ge a/(ca+J)$ is the shrinkage: the prior
contributes $a$ votes to each increment and $ca$ in total, each pool whose
reading is attributed to a free increment contributes one vote, and no
increment can be voted to zero. The maximum likelihood estimator has no such
floor and may return $\widehat\lambda_r=0$. The posterior mode under
$\mathrm{Dirichlet}(a\bm1+\bm1)$ is the corresponding penalized maximum
likelihood estimator, which shrinks the saturated curve towards equal
increments; we report the posterior mean instead.
\end{remark}

\begin{remark}[Mixing]
\label{sec:sampler}
Mixing is governed by the latent allocation $\bm R$ and is slow when the
increments are weakly informed: effective sample sizes for $h(1)$ are
$65$--$101$ per $1500$ retained draws in the low-prevalence design at
$N=2000$, against $232$--$474$ at moderate prevalence. Blocking $(\bm R,\blambda)$, or removing
$\widetilde\bY$ from the state as in \cref{sec:collapse}, would help.
\end{remark}

\subsection{Gradients of the observed-data likelihood}
\label{sec:grad}

The maximum likelihood fits used for comparison in \mref{sec:sim}\mainof{}
and the importance-sampling modes of \cref{sec:marglik} are found by
quasi-Newton from the following analytic gradients of
$\log L=\sum_j\log\ell_j$, with $\ell_j$ evaluated by
$\ell_j=\sum_{k=0}^{c}q_j(k;\bbeta)g_j(k)$ rather than by enumeration.

\emph{In $\blambda$.} By \cref{eq:affine} $\ell_j$ is linear in $\blambda$
with coefficients free of $\blambda$, so
\begin{equation}
\frac{\partial\log L}{\partial\lambda_r}=\sum_{j=1}^{J}\frac{c_{jr}}{\ell_j},
\qquad
\frac{\partial\log L}{\partial w_r}=C\sum_{j=1}^{J}\frac{c_{jr}}{\ell_j},
\qquad r=1,\dots,c ,
\label{eq:gradlam}
\end{equation}
the $c_{jr}$ being obtained from $q_j(\cdot;\bbeta)$ by one cumulative sum per
pool. The saturated fit optimizes \cref{eq:gradlam} in the increment
coordinates under $\lambda_r\ge0$, $\sum_{r\le c}\lambda_r=C$.

\emph{In $\bbeta$.} With the logit link,
$\partial p_{ij}/\partial\bbeta=p_{ij}(1-p_{ij})\bX_{ij}=:\bm d_{ij}$, so
\begin{equation}
\frac{\partial a_{ij}}{\partial\bbeta}=-\bm d_{ij}\,\phi(Y_{ij}\mid0)^{Z_j},
\qquad
\frac{\partial b_{ij}}{\partial\bbeta}=\bm d_{ij}\,\phi(Y_{ij}\mid1)^{Z_j} .
\label{eq:abgrad}
\end{equation}
Differentiating the recursion \mref{eq:pbrec} term by term gives a companion recursion
for $\dot q^{(i)}_j(k):=\partial q^{(i)}_j(k)/\partial\bbeta$,
\begin{equation}
\dot q^{(i)}_j(k)=a_{ij}\dot q^{(i-1)}_j(k)+b_{ij}\dot q^{(i-1)}_j(k-1)
+\frac{\partial a_{ij}}{\partial\bbeta}q^{(i-1)}_j(k)
+\frac{\partial b_{ij}}{\partial\bbeta}q^{(i-1)}_j(k-1),
\qquad \dot q^{(0)}_j(0)=\bm0 ,
\label{eq:gradrec}
\end{equation}
run alongside \mref{eq:pbrec} and with the same support convention, whence
\begin{equation}
\frac{\partial\log L}{\partial\bbeta}
=\sum_{j=1}^{J}\frac{1}{\ell_j}\sum_{k=0}^{c}g_j(k)\,\dot q^{(c)}_j(k) .
\label{eq:gradbeta}
\end{equation}
Cost is $O(pc^{2})$ per pool, against $O(pc\,2^{c})$ for differentiating the
enumerated form.

\subsection{Marginal likelihoods}
\label{sec:marglik}

\paragraph{The integral over $\bbeta$.} For a fixed curve $h$ write
\begin{equation}
I(h)=\int L(\bbeta,h\mid\mathcal D)\,\pi(\bbeta)\,d\bbeta,
\qquad\text{so that}\qquad m_0(\mathcal D)=I(h^{0}) .
\label{eq:Ih}
\end{equation}
Let
\begin{equation}
\widehat\bbeta_h=\argmax_{\bbeta}\log\{L(\bbeta,h\mid\mathcal D)\pi(\bbeta)\},
\qquad
\widehat V_h=\Bigl[-\nabla^{2}\log\{L(\bbeta,h\mid\mathcal D)\pi(\bbeta)\}
\bigr|_{\bbeta=\widehat\bbeta_h}\Bigr]^{-1},
\label{eq:ismode}
\end{equation}
where $\widehat\bbeta_h$ is found by quasi-Newton with the analytic gradient
\cref{eq:gradbeta} and $\widehat V_h$ by differentiating that gradient
numerically at $\widehat\bbeta_h$, symmetrizing and inverting; the
symmetrization matters because the numerical Jacobian of a gradient is not
exactly symmetric. Let $g_h$ be
the density of $t_\nu(\widehat\bbeta_h,\widehat V_h)$. Then
\begin{equation}
\widehat I(h)=\frac{1}{S}\sum_{s=1}^{S}
\frac{L(\bbeta^{(s)},h\mid\mathcal D)\,\pi(\bbeta^{(s)})}{g_h(\bbeta^{(s)})},
\qquad \bbeta^{(1)},\dots,\bbeta^{(S)}\ \text{independent from}\ g_h ,
\label{eq:is}
\end{equation}
with $\nu=5$ and $S=4000$. The log estimate and its standard error follow from
the normalized importance weights; effective sample sizes were above $3000$ in
every case we examined, and \cref{eq:is} is accurate to two decimals in
$\log_{10}$.

\paragraph{The ordinate.} The marginal likelihood identity \mref{eq:chib}, which is
Bayes' theorem rearranged and therefore holds at any interior point, is applied at
\begin{equation}
w^\star_r=\mathrm{med}(w_r\mid\mathcal D)\Bigl/
\sum_{s=1}^{c}\mathrm{med}(w_s\mid\mathcal D),\qquad r=1,\dots,c,
\label{eq:wstar}
\end{equation}
the renormalized vector of component-wise posterior medians. Here \cref{eq:c-w} makes the
conditional of $\bm w$ a Dirichlet depending on $(\bbeta,\widetilde\bY,\bm R)$
only through $\bm n$, so the ordinate \mref{eq:ordinate} is
\[
p(\bm w^\star\mid\mathcal D)
=\E\bigl[p(\bm w^\star\mid\bbeta,\widetilde\bY,\bm R,\mathcal D)\bigm|\mathcal D\bigr]
=\E\bigl[\mathrm{Dir}\{\bm w^\star;a\bm1+\bm n\}\bigm|\mathcal D\bigr],
\]
estimated from the $T$ retained draws $\bm n^{(1)},\dots,\bm n^{(T)}$ by
\begin{equation}
\widehat p(\bm w^\star\mid\mathcal D)
=\frac{1}{T}\sum_{t=1}^{T}\mathrm{Dir}\{\bm w^\star;a\bm1+\bm n^{(t)}\},
\label{eq:ordhat}
\end{equation}
computed on the log scale, with a batch-means standard error. Collecting the
pieces,
\begin{equation}
\begin{split}
\log\widehat B_{10}
=\ &\log\widehat I(h_{\bm w^\star})+\log\mathrm{Dir}(\bm w^\star;a\bm1)\\
&-\log\widehat p(\bm w^\star\mid\mathcal D)-\log\widehat I(h^{0}),
\end{split}
\label{eq:bfhat}
\end{equation}
whose Monte Carlo standard error combines the two importance-sampling errors
and the batch-means error in quadrature.

\subsection{Collapsed and hybrid samplers}
\label{sec:collapse}

The statuses $\widetilde\bY$ enter \cref{eq:c-R}--\cref{eq:c-w} only through
$K_j$, so they can be removed from the state; \cref{eq:c-beta} is the
exception.

\begin{proposition}
\label{prop:collapse}
Summing \cref{eq:augjoint} over the configurations with a common count leaves
a joint density of $(\bbeta,\bm w,\bm K,\bm R)$ in which
\begin{equation}
\Prob(K_j=k\mid\bbeta,\bm w,\mathcal D)\propto q_j(k;\bbeta)g_j(k),
\qquad
\Prob(R_j=r\mid\bm K,\bm w)=\lambda_r/g_j(K_j),
\label{eq:collapsed1}
\end{equation}
\begin{equation}
\bm w\mid\bm R\sim\mathrm{Dirichlet}(a\bm1+\bm n),
\qquad
p(\bbeta\mid\bm K,\mathcal D)\propto\pi(\bbeta)\prod_{j=1}^{J}q_j(K_j;\bbeta),
\label{eq:collapsed2}
\end{equation}
and the chain on $(\bbeta,\bm K,\bm R,\bm w)$ targets the correct marginal
posterior.
\end{proposition}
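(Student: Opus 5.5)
The plan is to marginalize the status configurations out of the augmented joint \cref{eq:augjoint} within each count class, and then read off the conditionals of the collapsed joint one factor at a time. In \cref{eq:augjoint} the factor $\lambda_{R_j}\mathbf 1\{R_j\in A_j(K_j)\}$ depends on $\widetilde\bY_j$ only through $K_j=|\widetilde\bY_j|$. The factors $\pi(\bbeta)$ and $\mathrm{Dir}(\bm w;a\bm1)$ do not involve $\widetilde\bY$ at all. For each pool I therefore split $\{0,1\}^c$ into the classes $\{\bm u:|\bm u|=k\}$ and sum $\pi_j(\bm u)\psi_j(\bm u)$ over each class. By \mref{eq:qjk} this sum is $q_j(k;\bbeta)$. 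Because the pools are disjoint blocks, the sum factorizes over $j$, as in the argument leading to \mref{eq:poolfac}. The collapsed joint is then
\[
p(\bbeta,\bm w,\bm K,\bm R,\mathcal D)=\pi(\bbeta)\,\mathrm{Dir}(\bm w;a\bm1)\prod_{j=1}^{J}q_j(K_j;\bbeta)\,\lambda_{R_j}\mathbf 1\{R_j\in A_j(K_j)\}.
\]
By \cref{lem:valid}(iii), summing this joint further over $(\bm K,\bm R)$ returns $\pi(\bbeta)\mathrm{Dir}(\bm w;a\bm1)L(\bbeta,\bm w\mid\mathcal D)$. So its $(\bbeta,\bm w)$-marginal is the target posterior, and this is what makes the collapsing legitimate.

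Next I derive each conditional from the collapsed joint.

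\emph{The count $K_j$.} Summing over $R_j$ and using \cref{lem:valid}(ii) leaves $q_j(k;\bbeta)g_j(k)$ for pool $j$, which gives the first statement of \cref{eq:collapsed1}.

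\emph{The allocation $R_j$.} The only factor containing $R_j$ is $\lambda_{R_j}\mathbf 1\{R_j\in A_j(K_j)\}$. Normalizing it gives $\lambda_r/g_j(K_j)$, exactly as in \cref{prop:conditionals}.

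\emph{The weights $\bm w$.} Given $\bm R$ (and $\bm K$), the $\bm w$-dependence is $\mathrm{Dir}(\bm w;a\bm1)\prod_j\lambda_{R_j}$. The $q_j$ factors and the indicators are constant in $\bm w$. This is the same monomial computation as in \cref{eq:c-w}, so the conditional is $\mathrm{Dirichlet}(a\bm1+\bm n)$.

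\emph{The coefficients $\bbeta$.} Given $\bm K$, the $\bbeta$-dependence is $\pi(\bbeta)\prod_jq_j(K_j;\bbeta)$. Here $\psi_j$ is absorbed into $q_j$, but it carries no $\bbeta$, so it does no harm.

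The remaining claim is that the sweep on $(\bbeta,\bm K,\bm R,\bm w)$ leaves the marginal posterior invariant. As in \cref{rem:order}, I treat $(\bm K,\bm R)$ as a single block drawn from $p(\bm K,\bm R\mid\bbeta,\bm w,\mathcal D)$. By the chain rule, this block is drawn by taking $K_j$ from its marginal conditional and then $R_j$ given $K_j$. The $\bm w$ draw and the $\bbeta$ draw are each exact full conditionals of the collapsed joint. Every step therefore preserves the collapsed joint, and hence also its $(\bbeta,\bm w)$-marginal.

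I expect the main obstacle to be the bookkeeping around the ordering rather than any computation. Drawing $K_j$ conditionally on $\bm R$ would bring back the indicator $\mathbf 1\{R_j\in A_j(k)\}$, so the proof must state the blocking explicitly. I will also note that $p(\bbeta\mid\bm K)$ is no longer a plain logistic posterior, which is why a Metropolis step on $\prod_jq_j(K_j;\bbeta)$ would be used there. This does not affect invariance.
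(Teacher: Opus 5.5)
Your proposal is correct and, for the four conditionals, follows the paper's proof. You sum $\pi_j\psi_j$ over each count class to get $q_j(K_j;\bbeta)$, note that the other factors of \cref{eq:augjoint} see $\widetilde\bY_j$ only through $K_j$, and read off the conditionals as in \cref{prop:conditionals}. Your explicit check that the collapsed joint keeps the $(\bbeta,\bm w)$-marginal $\pi(\bbeta)\mathrm{Dir}(\bm w;a\bm1)L$ is a step the paper leaves implicit.

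The one real difference is in the invariance claim. The paper argues that once $\widetilde\bY$ is removed from the state, no ordering condition of the kind in \cref{rem:order} arises. You instead update $(\bm K,\bm R)$ as a block, drawing $K_j$ first and then $R_j$ given $K_j$. Your version is the sound one. The law of $K_j$ in \cref{eq:collapsed1} has $R_j$ summed out; the full conditional of the collapsed joint is $\propto q_j(k;\bbeta)\mathbf 1\{R_j\in A_j(k)\}$. So the $K_j$ draw is a partially collapsed step of exactly the type treated in \cref{rem:order}. It preserves the target only if $R_j$ is redrawn from $\lambda_r/g_j(K_j)$ before $\bm w$ is next drawn from $\mathrm{Dirichlet}(a\bm1+\bm n)$. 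The paper's ``no ordering'' remark would hold only for a sampler that drew $K_j$ from that truncated full conditional instead. Keep your blocking argument: it is what actually justifies the final sentence of the proposition.
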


\begin{proof}
The pool-$j$ factor of \cref{eq:augjoint} depends on $\widetilde\bY_j$
through $\pi_j\psi_j$ and, in the indicator, through $K_j$ alone; summing over
$\{\bm u:|\bm u|=k\}$ replaces $\pi_j(\bm u)\psi_j(\bm u)$ by $q_j(k;\bbeta)$
and leaves every other factor unchanged. The first three conditionals are then
\cref{prop:conditionals} verbatim with $q_j(K_j;\bbeta)$ in place of
$\pi_j(\widetilde\bY_j)\psi_j(\widetilde\bY_j)$, the last collects the
$\bbeta$-dependent factors of the collapsed joint. Because $\widetilde\bY$ is
removed from the state and never conditioned on, no ordering condition of the
kind of \cref{rem:order} or of \citet{vanDykPark2008} arises.
\end{proof}

\paragraph{Trade-off.} The last conditional of \cref{eq:collapsed2} is the
price: without individual statuses there is no complete-data logistic
regression, so the independence proposal of \mref{sec:gibbs}\mainof{} is
unavailable and $\bbeta$ needs a tuned random-walk step, or a gradient step
using \cref{eq:gradrec}. A third version avoids the loss. The forward
recursion \mref{eq:pbrec} already stores the partial weights $q^{(i)}_j$, so the backward
pass \cref{eq:backward} recovers $\widetilde\bY_j$ from $K_j$ in $O(c)$
further operations per pool, exactly by \cref{lem:fb}, after which
\cref{eq:c-beta} applies unchanged. This \emph{hybrid} sampler is the sampler
of \mref{sec:gibbs}\mainof{} in distribution, at $O(c^{2})$ instead of
$O(c\,2^{c})$ per pool. The recursion \mref{eq:pbrec} is worth using in any case, since
$\ell_j=\sum_kq_j(k;\bbeta)g_j(k)$ is also how the observed-data likelihood
and \cref{eq:Ih} are best evaluated.

The three versions are compared below, the augmented one being the sampler of
\mref{sec:gibbs}\mainof{}. They differ only in how the latent
statuses are handled; the allocation step \cref{eq:c-R}, which governs mixing,
is the same in all three, so the choice is one of cost per iteration and of
what remains available for the $\bbeta$ step.

\begingroup\singlespacing\small
\begin{center}
\begin{tabular}{@{}l p{0.25\linewidth} p{0.23\linewidth} p{0.25\linewidth}@{}}
\toprule
 & Augmented & Collapsed & Hybrid \\
\midrule
State & $(\bbeta,\bm w,\widetilde\bY,\bm R)$
      & $(\bbeta,\bm w,\bm K,\bm R)$
      & $(\bbeta,\bm w,\bm K,\bm R)$, with $\widetilde\bY$ rebuilt \\[4pt]
Statuses & drawn by enumeration over the $2^{c}$ configurations
         & summed out exactly
         & drawn from $K_j$ by the backward pass \cref{eq:backward} \\[4pt]
Cost per pool & $O(c\,2^{c})$ & $O(c^{2})$ & $O(c^{2})$ \\[4pt]
$\bbeta$ conditional & complete-data logistic \cref{eq:c-beta}
                     & $\pi(\bbeta)\prod_jq_j(K_j;\bbeta)$, \cref{eq:collapsed2}
                     & complete-data logistic \cref{eq:c-beta} \\[4pt]
$\bbeta$ step & independence proposal at the fit
              & tuned random walk, or gradient \cref{eq:gradrec}
              & independence proposal at the fit \\[4pt]
Target & the posterior
       & the same marginal posterior, \cref{prop:collapse}
       & equal in distribution to the augmented sampler \\[4pt]
Block ordering & required, \cref{rem:order} & not required & required, \cref{rem:order} \\
\bottomrule
\end{tabular}
\end{center}
\endgroup

\noindent Collapsing alone therefore buys speed but forfeits the independence
proposal: without complete-data responses there is no logistic fit to propose
from, and $\bbeta$ must be moved by a random walk centred at its current value.
The hybrid keeps both, and is the version to prefer once $c$ is large enough
for the $2^{c}$ enumeration to dominate.

\section{Additional results}
\label{app:B}

Tables~1--2 give the curve error and $\widehat\beta_0$ for the
low-prevalence design at $N=1000$, Tables~3--4 the same for the
moderate-prevalence design at $N=2000$, Tables~5--6 the corresponding
coverage results, Tables~7--8 the bias and variability of
$\widehat\beta_1$ and $\widehat\beta_2$ in the low-prevalence design at
$N=2000$, Table~9 the Bayes factor across pool sizes and prevalences, and
Figure~1 the posterior-mean curves in the moderate-prevalence design.

\begin{table}[!ht]
    \caption{Curve error, low-prevalence design, $c=5$, $N=1000$ (as \mref{tab:est}\mainof{}).}\centering\small
    \IfFileExists{tables/badger_mae_lowprev_1000.tex}{\input{tables/badger_mae_lowprev_1000}}{(pending)}
\end{table}

\begin{table}[!ht]
    \caption{$\widehat\beta_0$, low-prevalence design, $c=5$, $N=1000$ (as \mref{tab:beta}\mainof{}).}
    \centering\small
    \IfFileExists{tables/badger_beta_lowprev_1000.tex}{\input{tables/badger_beta_lowprev_1000}}{(pending)}
\end{table}

\begin{table}[!ht]
    \caption{Curve error, moderate-prevalence design, $c=5$, $N=2000$.}
    \centering\small
    \IfFileExists{tables/badger_mae_modprev_2000.tex}{\input{tables/badger_mae_modprev_2000}}{(pending)}
\end{table}

\begin{table}[!ht]
    \caption{$\widehat\beta_0$, moderate-prevalence design, $c=5$, $N=2000$.}
    \centering\small
    \IfFileExists{tables/badger_beta_modprev_2000.tex}{\input{tables/badger_beta_modprev_2000}}{(pending)}
\end{table}

\begin{table}[!ht]
    \caption{Coverage, low-prevalence design, $N=1000$ (as \mref{tab:cov}\mainof{}).}\centering\footnotesize
    \IfFileExists{tables/badger_cov_lowprev_1000.tex}{\input{tables/badger_cov_lowprev_1000}}{(pending)}
\end{table}

\begin{table}[!ht]
    \caption{Coverage, moderate-prevalence design, $N=2000$ (as \mref{tab:cov}\mainof{}).}
    \centering\footnotesize
    \IfFileExists{tables/badger_cov_modprev_2000.tex}{\input{tables/badger_cov_modprev_2000}}{(pending)}
\end{table}

\begin{table}[!ht]
    \caption{$\widehat\beta_1$, low-prevalence design, $c=5$, $N=2000$: bias with the standard deviation over replications (as \mref{tab:beta}\mainof{}, which reports $\widehat\beta_0$).}
    \centering\small
    \IfFileExists{tables/badger_beta1_lowprev_2000.tex}{\input{tables/badger_beta1_lowprev_2000}}{(pending)}
\end{table}

\begin{table}[!ht]
    \caption{$\widehat\beta_2$, low-prevalence design, $c=5$, $N=2000$: bias with the standard deviation over replications.}
    \centering\small
    \IfFileExists{tables/badger_beta2_lowprev_2000.tex}{\input{tables/badger_beta2_lowprev_2000}}{(pending)}
\end{table}

\begin{table}[!ht]
    \caption{Bayes factor across pool size and prevalence ($N=2000$, low-prevalence covariates, concave--moderate dilution and no dilution, $50$ replications ($20$ replications and $4,000$ iterations at $c=10$), $a=1/c$): $\Prob(K_j\ge2)$, frequency of strong evidence and median $\log_{10}\widehat B_{10}$, and posterior-mean curve error.}
    \centering\small
    \IfFileExists{tables/badger_grid.tex}{\input{tables/badger_grid}}{(pending)}
\end{table}

\begin{figure}[!ht]
    \centering
    \IfFileExists{figures/badger_curves_modprev.pdf}{\includegraphics[width=\textwidth]{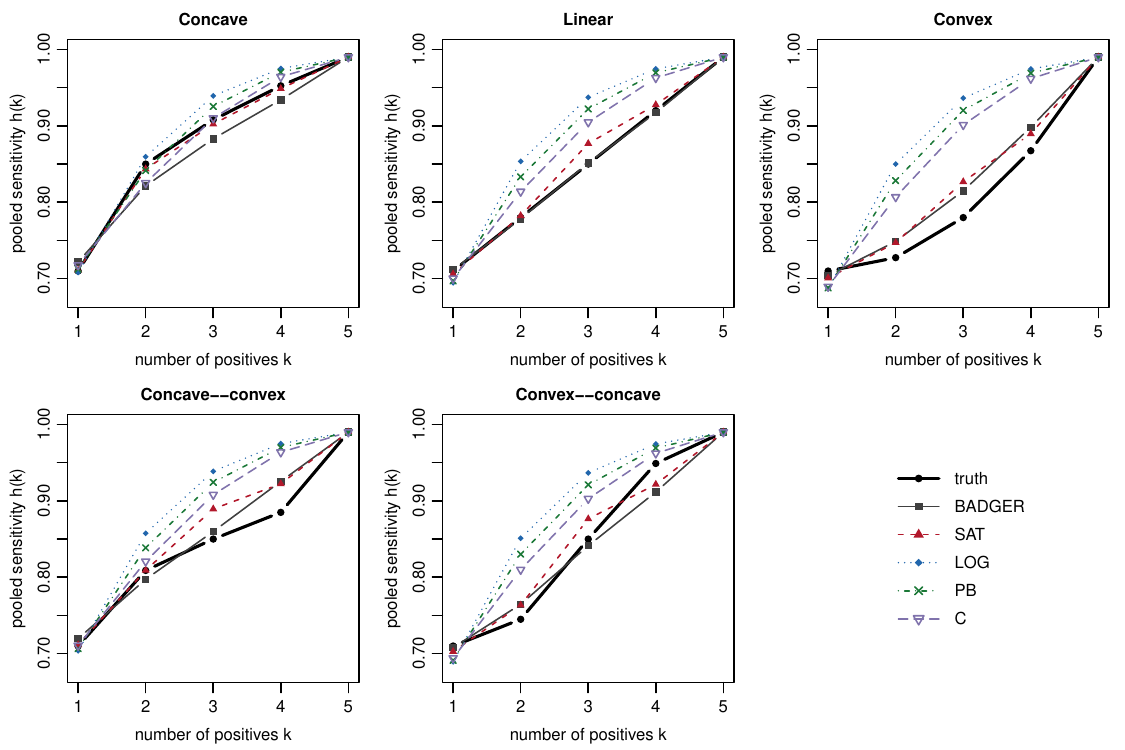}}{\fbox{pending}}
    \caption{Posterior-mean curves under BADGER, moderate level, moderate-prevalence design, $c=5$, $N=2000$.}
\end{figure}

%% file: tables/badger_mae_lowprev_1000.tex
\begin{tabular}{llccccc}
\toprule
Shape & Level & BADGER & SAT & LOG & PB & C\\
\midrule
No dilution & None & 9.19 & 5.20 & \textbf{2.94} & 3.20 & 3.47\\
\midrule
\multirow{3}{*}{Concave} & Mild & 7.49 & 7.78 & \textbf{4.38} & 4.79 & 5.25\\
 & Moderate & 7.15 & 12.58 & 5.85 & \textbf{5.82} & 6.30\\
 & Severe & 10.78 & 20.06 & 8.18 & \textbf{7.66} & 8.30\\
\midrule
\multirow{3}{*}{Linear} & Mild & 6.36 & 8.54 & \textbf{4.77} & 5.05 & 5.43\\
 & Moderate & \textbf{5.86} & 14.08 & 9.71 & 9.07 & 8.63\\
 & Severe & \textbf{9.02} & 21.20 & 15.45 & 13.91 & 12.31\\
\midrule
\multirow{3}{*}{Convex} & Mild & 5.64 & 9.17 & \textbf{5.62} & 5.72 & 5.94\\
 & Moderate & \textbf{8.24} & 16.41 & 14.36 & 13.61 & 12.88\\
 & Severe & \textbf{14.44} & 22.49 & 24.21 & 22.66 & 20.65\\
\midrule
\multirow{3}{*}{Concave--convex} & Mild & 6.41 & 8.53 & \textbf{4.89} & 5.15 & 5.49\\
 & Moderate & \textbf{6.38} & 14.58 & 9.93 & 9.44 & 9.14\\
 & Severe & \textbf{9.80} & 21.66 & 15.43 & 14.02 & 12.96\\
\midrule
\multirow{3}{*}{Convex--concave} & Mild & 6.41 & 8.52 & \textbf{4.65} & 5.00 & 5.44\\
 & Moderate & \textbf{6.50} & 13.65 & 9.86 & 9.14 & 8.60\\
 & Severe & \textbf{10.24} & 20.38 & 16.68 & 15.12 & 13.39\\
\bottomrule
\end{tabular}

%% file: tables/badger_beta_lowprev_1000.tex
\begin{tabular}{llccccc}
\toprule
Shape & Level & BADGER & SAT & LOG & PB & C\\
\midrule
No dilution & None & +0.125 (0.18) & +0.082 (0.20) & \textbf{+0.054} (0.18) & +0.057 (0.18) & +0.059 (0.18)\\
\midrule
\multirow{3}{*}{Concave} & Mild & +0.070 (0.19) & +0.035 (0.22) & \textbf{+0.001} (0.20) & +0.005 (0.20) & +0.010 (0.20)\\
 & Moderate & +0.016 (0.26) & +0.020 (0.30) & -0.037 (0.27) & -0.025 (0.28) & \textbf{-0.012} (0.29)\\
 & Severe & \textbf{+0.000} (0.34) & +0.065 (0.39) & -0.077 (0.32) & -0.054 (0.33) & -0.020 (0.35)\\
\midrule
\multirow{3}{*}{Linear} & Mild & +0.067 (0.21) & +0.031 (0.24) & -0.004 (0.21) & \textbf{+0.000} (0.21) & +0.005 (0.22)\\
 & Moderate & -0.025 (0.25) & \textbf{-0.022} (0.30) & -0.085 (0.28) & -0.074 (0.28) & -0.062 (0.29)\\
 & Severe & -0.108 (0.31) & \textbf{-0.047} (0.39) & -0.190 (0.33) & -0.168 (0.33) & -0.137 (0.35)\\
\midrule
\multirow{3}{*}{Convex} & Mild & +0.069 (0.21) & +0.034 (0.24) & \textbf{-0.002} (0.21) & +0.003 (0.21) & +0.008 (0.22)\\
 & Moderate & \textbf{-0.046} (0.27) & -0.054 (0.30) & -0.121 (0.28) & -0.111 (0.29) & -0.098 (0.30)\\
 & Severe & -0.145 (0.32) & \textbf{-0.079} (0.39) & -0.245 (0.33) & -0.225 (0.34) & -0.197 (0.36)\\
\midrule
\multirow{3}{*}{Concave--convex} & Mild & +0.070 (0.19) & +0.035 (0.22) & \textbf{-0.000} (0.19) & +0.004 (0.20) & +0.009 (0.20)\\
 & Moderate & -0.010 (0.26) & \textbf{-0.005} (0.31) & -0.068 (0.28) & -0.057 (0.29) & -0.044 (0.30)\\
 & Severe & -0.080 (0.34) & \textbf{-0.032} (0.40) & -0.162 (0.34) & -0.140 (0.35) & -0.109 (0.36)\\
\midrule
\multirow{3}{*}{Convex--concave} & Mild & +0.069 (0.21) & +0.034 (0.24) & \textbf{-0.002} (0.21) & +0.003 (0.21) & +0.008 (0.22)\\
 & Moderate & \textbf{-0.037} (0.26) & -0.040 (0.29) & -0.105 (0.27) & -0.094 (0.28) & -0.082 (0.29)\\
 & Severe & -0.123 (0.33) & \textbf{-0.066} (0.40) & -0.225 (0.33) & -0.204 (0.34) & -0.174 (0.36)\\
\bottomrule
\end{tabular}

%% file: tables/badger_mae_modprev_2000.tex
\begin{tabular}{llccccc}
\toprule
Shape & Level & BADGER & SAT & LOG & PB & C\\
\midrule
No dilution & None & 1.70 & 0.71 & \textbf{0.56} & \textbf{0.56} & \textbf{0.56}\\
\midrule
\multirow{3}{*}{Concave} & Mild & 1.91 & 2.95 & \textbf{1.83} & 1.88 & 1.94\\
 & Moderate & 4.34 & 5.98 & 2.64 & 2.23 & \textbf{2.12}\\
 & Severe & 5.59 & 7.82 & 3.94 & \textbf{3.14} & 3.23\\
\midrule
\multirow{3}{*}{Linear} & Mild & \textbf{1.99} & 3.33 & 2.41 & 2.30 & 2.23\\
 & Moderate & \textbf{3.53} & 6.72 & 6.25 & 5.22 & 4.18\\
 & Severe & \textbf{5.58} & 8.95 & 10.34 & 8.36 & 5.81\\
\midrule
\multirow{3}{*}{Convex} & Mild & \textbf{2.46} & 3.87 & 3.41 & 3.23 & 3.09\\
 & Moderate & \textbf{4.82} & 7.46 & 10.74 & 9.64 & 8.43\\
 & Severe & \textbf{5.84} & 9.38 & 17.84 & 15.76 & 12.86\\
\midrule
\multirow{3}{*}{Concave--convex} & Mild & \textbf{2.15} & 3.47 & 2.66 & 2.58 & 2.54\\
 & Moderate & \textbf{4.13} & 7.09 & 6.47 & 5.51 & 4.70\\
 & Severe & 6.97 & 9.55 & 10.76 & 8.90 & \textbf{6.89}\\
\midrule
\multirow{3}{*}{Convex--concave} & Mild & 2.21 & 3.46 & 2.36 & 2.25 & \textbf{2.20}\\
 & Moderate & \textbf{4.07} & 6.56 & 6.46 & 5.37 & 4.21\\
 & Severe & \textbf{5.50} & 8.28 & 10.75 & 8.73 & 5.98\\
\bottomrule
\end{tabular}

%% file: tables/badger_beta_modprev_2000.tex
\begin{tabular}{llccccc}
\toprule
Shape & Level & BADGER & SAT & LOG & PB & C\\
\midrule
No dilution & None & -0.057 (0.17) & \textbf{-0.054} (0.17) & -0.055 (0.17) & -0.055 (0.17) & -0.055 (0.17)\\
\midrule
\multirow{3}{*}{Concave} & Mild & -0.056 (0.15) & \textbf{-0.048} (0.15) & -0.050 (0.15) & -0.049 (0.15) & -0.049 (0.15)\\
 & Moderate & -0.042 (0.19) & -0.031 (0.19) & -0.041 (0.18) & -0.035 (0.18) & \textbf{-0.029} (0.19)\\
 & Severe & -0.063 (0.21) & -0.051 (0.20) & -0.071 (0.19) & -0.055 (0.20) & \textbf{-0.036} (0.20)\\
\midrule
\multirow{3}{*}{Linear} & Mild & -0.061 (0.15) & \textbf{-0.051} (0.15) & -0.055 (0.14) & -0.054 (0.14) & -0.053 (0.14)\\
 & Moderate & -0.070 (0.20) & \textbf{-0.056} (0.20) & -0.082 (0.19) & -0.075 (0.19) & -0.068 (0.19)\\
 & Severe & -0.104 (0.22) & \textbf{-0.088} (0.22) & -0.149 (0.21) & -0.132 (0.21) & -0.110 (0.22)\\
\midrule
\multirow{3}{*}{Convex} & Mild & -0.061 (0.17) & \textbf{-0.049} (0.17) & -0.056 (0.17) & -0.055 (0.17) & -0.054 (0.17)\\
 & Moderate & -0.071 (0.18) & \textbf{-0.051} (0.19) & -0.095 (0.17) & -0.087 (0.18) & -0.079 (0.18)\\
 & Severe & -0.106 (0.27) & \textbf{-0.077} (0.27) & -0.194 (0.25) & -0.177 (0.25) & -0.151 (0.25)\\
\midrule
\multirow{3}{*}{Concave--convex} & Mild & -0.061 (0.15) & \textbf{-0.051} (0.15) & -0.054 (0.15) & -0.054 (0.15) & -0.053 (0.15)\\
 & Moderate & -0.065 (0.19) & \textbf{-0.052} (0.19) & -0.072 (0.19) & -0.066 (0.19) & -0.059 (0.19)\\
 & Severe & -0.097 (0.24) & \textbf{-0.082} (0.23) & -0.125 (0.22) & -0.109 (0.22) & -0.089 (0.23)\\
\midrule
\multirow{3}{*}{Convex--concave} & Mild & -0.063 (0.15) & \textbf{-0.053} (0.15) & -0.058 (0.15) & -0.057 (0.15) & -0.056 (0.15)\\
 & Moderate & -0.064 (0.18) & \textbf{-0.050} (0.18) & -0.082 (0.18) & -0.075 (0.18) & -0.067 (0.18)\\
 & Severe & -0.109 (0.23) & \textbf{-0.090} (0.23) & -0.174 (0.21) & -0.157 (0.21) & -0.132 (0.21)\\
\bottomrule
\end{tabular}

%% file: tables/badger_cov_lowprev_1000.tex
\begin{tabular}{llcccccccc}
\toprule
Shape & Level & \multicolumn{4}{c}{$h(k)$} & \multicolumn{3}{c}{$\bbeta$} & ESS\\
\cmidrule(lr){3-6}\cmidrule(lr){7-9}
 & & $k=1$ & $2$ & $3$ & $4$ & $\beta_0$ & $\beta_1$ & $\beta_2$ & $h(1)$\\
\midrule
No dilution & None & 0.00 (0.33) & 0.00 (0.32) & 0.00 (0.28) & 0.00 (0.23) & 0.98 (0.86) & 0.94 (1.20) & 0.94 (1.42) & 151\\
\midrule
\multirow{3}{*}{Concave} & Mild & 0.94 (0.37) & 1.00 (0.36) & 1.00 (0.33) & 1.00 (0.26) & 0.96 (0.92) & 0.96 (1.26) & 0.92 (1.47) & 143\\
 & Moderate & 0.96 (0.48) & 1.00 (0.50) & 1.00 (0.46) & 1.00 (0.38) & 0.98 (1.13) & 0.96 (1.39) & 0.92 (1.66) & 106\\
 & Severe & 0.88 (0.51) & 0.96 (0.60) & 1.00 (0.60) & 1.00 (0.52) & 0.94 (1.37) & 0.96 (1.68) & 0.96 (2.15) & 127\\
\midrule
\multirow{3}{*}{Linear} & Mild & 0.94 (0.37) & 1.00 (0.36) & 1.00 (0.33) & 1.00 (0.26) & 0.96 (0.92) & 0.96 (1.25) & 0.94 (1.47) & 141\\
 & Moderate & 0.98 (0.48) & 1.00 (0.48) & 1.00 (0.45) & 1.00 (0.37) & 0.98 (1.12) & 0.90 (1.40) & 0.94 (1.66) & 110\\
 & Severe & 0.94 (0.54) & 0.96 (0.60) & 1.00 (0.59) & 1.00 (0.51) & 0.98 (1.40) & 0.98 (1.69) & 0.92 (2.18) & 99\\
\midrule
\multirow{3}{*}{Convex} & Mild & 0.94 (0.37) & 1.00 (0.36) & 1.00 (0.33) & 1.00 (0.27) & 0.96 (0.92) & 0.94 (1.26) & 0.94 (1.47) & 142\\
 & Moderate & 0.96 (0.48) & 1.00 (0.48) & 1.00 (0.45) & 1.00 (0.37) & 0.98 (1.13) & 0.94 (1.41) & 0.92 (1.67) & 110\\
 & Severe & 0.92 (0.55) & 0.96 (0.59) & 0.98 (0.58) & 1.00 (0.49) & 0.98 (1.41) & 0.98 (1.67) & 0.94 (2.17) & 92\\
\midrule
\multirow{3}{*}{Concave--convex} & Mild & 0.94 (0.37) & 1.00 (0.36) & 1.00 (0.33) & 1.00 (0.26) & 0.96 (0.92) & 0.96 (1.26) & 0.94 (1.47) & 143\\
 & Moderate & 0.98 (0.48) & 1.00 (0.49) & 1.00 (0.45) & 1.00 (0.37) & 0.98 (1.12) & 0.94 (1.39) & 0.94 (1.65) & 114\\
 & Severe & 0.90 (0.52) & 1.00 (0.61) & 1.00 (0.59) & 1.00 (0.51) & 0.96 (1.39) & 0.94 (1.70) & 0.92 (2.14) & 110\\
\midrule
\multirow{3}{*}{Convex--concave} & Mild & 0.94 (0.37) & 1.00 (0.36) & 1.00 (0.33) & 1.00 (0.27) & 0.96 (0.92) & 0.94 (1.26) & 0.94 (1.47) & 142\\
 & Moderate & 0.96 (0.48) & 1.00 (0.49) & 1.00 (0.45) & 1.00 (0.38) & 0.98 (1.13) & 0.92 (1.40) & 0.92 (1.67) & 111\\
 & Severe & 0.90 (0.54) & 0.96 (0.59) & 1.00 (0.58) & 1.00 (0.50) & 0.98 (1.40) & 0.98 (1.67) & 0.92 (2.16) & 99\\
\bottomrule
\end{tabular}

%% file: tables/badger_cov_modprev_2000.tex
\begin{tabular}{llcccccccc}
\toprule
Shape & Level & \multicolumn{4}{c}{$h(k)$} & \multicolumn{3}{c}{$\bbeta$} & ESS\\
\cmidrule(lr){3-6}\cmidrule(lr){7-9}
 & & $k=1$ & $2$ & $3$ & $4$ & $\beta_0$ & $\beta_1$ & $\beta_2$ & $h(1)$\\
\midrule
No dilution & None & 0.00 (0.09) & 0.00 (0.07) & 0.00 (0.05) & 0.00 (0.04) & 0.90 (0.65) & 0.90 (0.49) & 0.98 (0.63) & 310\\
\midrule
\multirow{3}{*}{Concave} & Mild & 0.96 (0.15) & 1.00 (0.13) & 1.00 (0.11) & 1.00 (0.09) & 0.96 (0.66) & 0.92 (0.51) & 1.00 (0.65) & 236\\
 & Moderate & 0.98 (0.20) & 0.96 (0.26) & 1.00 (0.26) & 1.00 (0.22) & 0.92 (0.74) & 0.98 (0.57) & 0.96 (0.73) & 263\\
 & Severe & 0.94 (0.20) & 0.94 (0.36) & 0.98 (0.38) & 1.00 (0.31) & 0.98 (0.85) & 0.90 (0.65) & 0.98 (0.83) & 235\\
\midrule
\multirow{3}{*}{Linear} & Mild & 0.96 (0.15) & 1.00 (0.13) & 1.00 (0.11) & 1.00 (0.09) & 0.94 (0.67) & 0.90 (0.51) & 1.00 (0.65) & 237\\
 & Moderate & 0.96 (0.19) & 0.94 (0.25) & 0.98 (0.29) & 1.00 (0.26) & 0.94 (0.77) & 0.94 (0.59) & 0.96 (0.75) & 353\\
 & Severe & 0.98 (0.19) & 0.96 (0.33) & 1.00 (0.44) & 1.00 (0.39) & 0.96 (0.90) & 0.84 (0.68) & 0.92 (0.86) & 331\\
\midrule
\multirow{3}{*}{Convex} & Mild & 0.96 (0.15) & 0.98 (0.14) & 0.98 (0.13) & 1.00 (0.11) & 0.92 (0.68) & 0.92 (0.52) & 1.00 (0.66) & 275\\
 & Moderate & 0.96 (0.18) & 0.96 (0.24) & 0.98 (0.31) & 1.00 (0.29) & 0.96 (0.78) & 0.92 (0.60) & 0.96 (0.76) & 429\\
 & Severe & 0.98 (0.18) & 1.00 (0.25) & 0.96 (0.46) & 1.00 (0.47) & 0.86 (0.94) & 0.96 (0.70) & 0.94 (0.90) & 474\\
\midrule
\multirow{3}{*}{Concave--convex} & Mild & 0.96 (0.15) & 1.00 (0.13) & 1.00 (0.11) & 1.00 (0.09) & 0.92 (0.67) & 0.92 (0.51) & 0.98 (0.65) & 232\\
 & Moderate & 1.00 (0.20) & 0.98 (0.25) & 0.98 (0.28) & 0.96 (0.24) & 0.92 (0.76) & 0.96 (0.58) & 0.98 (0.74) & 330\\
 & Severe & 0.96 (0.19) & 0.92 (0.34) & 1.00 (0.41) & 0.96 (0.36) & 0.94 (0.89) & 0.86 (0.67) & 0.96 (0.85) & 331\\
\midrule
\multirow{3}{*}{Convex--concave} & Mild & 0.96 (0.15) & 0.98 (0.13) & 1.00 (0.12) & 1.00 (0.10) & 0.92 (0.67) & 0.92 (0.51) & 1.00 (0.66) & 255\\
 & Moderate & 0.98 (0.19) & 1.00 (0.25) & 1.00 (0.30) & 1.00 (0.27) & 0.96 (0.77) & 0.90 (0.59) & 0.96 (0.74) & 381\\
 & Severe & 0.96 (0.18) & 0.98 (0.30) & 0.98 (0.45) & 1.00 (0.41) & 0.94 (0.92) & 0.92 (0.70) & 0.98 (0.88) & 424\\
\bottomrule
\end{tabular}

%% file: tables/badger_beta1_lowprev_2000.tex
\begin{tabular}{llccccc}
\toprule
Shape & Level & BADGER & SAT & LOG & PB & C\\
\midrule
No dilution & None & +0.043 (0.21) & +0.021 (0.21) & \textbf{+0.017} (0.21) & +0.018 (0.21) & +0.018 (0.21)\\
\midrule
\multirow{3}{*}{Concave} & Mild & +0.047 (0.21) & +0.024 (0.21) & \textbf{+0.017} (0.21) & +0.018 (0.21) & +0.019 (0.21)\\
 & Moderate & +0.024 (0.29) & \textbf{+0.003} (0.30) & -0.012 (0.28) & -0.009 (0.29) & -0.004 (0.29)\\
 & Severe & +0.065 (0.29) & +0.034 (0.29) & \textbf{+0.008} (0.28) & +0.015 (0.28) & +0.025 (0.28)\\
\midrule
\multirow{3}{*}{Linear} & Mild & +0.054 (0.21) & +0.031 (0.22) & \textbf{+0.023} (0.21) & +0.024 (0.21) & +0.025 (0.21)\\
 & Moderate & +0.013 (0.30) & \textbf{-0.007} (0.30) & -0.025 (0.29) & -0.022 (0.29) & -0.018 (0.29)\\
 & Severe & +0.036 (0.33) & +0.007 (0.33) & -0.020 (0.31) & -0.014 (0.31) & \textbf{-0.005} (0.32)\\
\midrule
\multirow{3}{*}{Convex} & Mild & +0.053 (0.22) & +0.029 (0.22) & \textbf{+0.022} (0.21) & \textbf{+0.022} (0.21) & +0.023 (0.21)\\
 & Moderate & \textbf{+0.006} (0.29) & -0.013 (0.29) & -0.033 (0.28) & -0.030 (0.29) & -0.026 (0.29)\\
 & Severe & \textbf{-0.010} (0.29) & -0.037 (0.30) & -0.068 (0.28) & -0.062 (0.28) & -0.053 (0.28)\\
\midrule
\multirow{3}{*}{Concave--convex} & Mild & +0.054 (0.21) & +0.031 (0.21) & \textbf{+0.023} (0.20) & +0.024 (0.20) & +0.025 (0.20)\\
 & Moderate & \textbf{+0.008} (0.30) & -0.014 (0.30) & -0.030 (0.29) & -0.027 (0.29) & -0.023 (0.29)\\
 & Severe & +0.039 (0.32) & +0.012 (0.33) & -0.016 (0.31) & -0.009 (0.31) & \textbf{+0.001} (0.32)\\
\midrule
\multirow{3}{*}{Convex--concave} & Mild & +0.053 (0.22) & +0.030 (0.22) & \textbf{+0.021} (0.21) & +0.022 (0.21) & +0.023 (0.21)\\
 & Moderate & \textbf{+0.006} (0.29) & -0.014 (0.29) & -0.034 (0.28) & -0.031 (0.29) & -0.027 (0.29)\\
 & Severe & +0.027 (0.32) & \textbf{-0.001} (0.32) & -0.030 (0.31) & -0.024 (0.31) & -0.015 (0.31)\\
\bottomrule
\end{tabular}

%% file: tables/badger_beta2_lowprev_2000.tex
\begin{tabular}{llccccc}
\toprule
Shape & Level & BADGER & SAT & LOG & PB & C\\
\midrule
No dilution & None & +0.020 (0.23) & +0.020 (0.22) & \textbf{+0.018} (0.22) & \textbf{+0.018} (0.22) & \textbf{+0.018} (0.22)\\
\midrule
\multirow{3}{*}{Concave} & Mild & +0.034 (0.24) & +0.037 (0.23) & \textbf{+0.032} (0.23) & \textbf{+0.032} (0.23) & +0.033 (0.23)\\
 & Moderate & +0.010 (0.28) & +0.016 (0.28) & \textbf{+0.009} (0.27) & +0.011 (0.27) & +0.013 (0.27)\\
 & Severe & +0.018 (0.27) & +0.029 (0.26) & \textbf{+0.008} (0.25) & +0.012 (0.26) & +0.018 (0.26)\\
\midrule
\multirow{3}{*}{Linear} & Mild & +0.027 (0.23) & +0.029 (0.22) & \textbf{+0.024} (0.22) & +0.025 (0.22) & +0.026 (0.22)\\
 & Moderate & +0.020 (0.27) & +0.028 (0.26) & \textbf{+0.018} (0.26) & +0.020 (0.26) & +0.022 (0.26)\\
 & Severe & -0.013 (0.32) & \textbf{+0.000} (0.32) & -0.016 (0.31) & -0.012 (0.31) & -0.007 (0.31)\\
\midrule
\multirow{3}{*}{Convex} & Mild & +0.023 (0.23) & +0.025 (0.22) & \textbf{+0.021} (0.22) & \textbf{+0.021} (0.22) & +0.022 (0.22)\\
 & Moderate & +0.019 (0.28) & +0.027 (0.27) & \textbf{+0.016} (0.27) & +0.018 (0.27) & +0.020 (0.27)\\
 & Severe & +0.019 (0.34) & +0.034 (0.33) & \textbf{+0.015} (0.32) & +0.018 (0.32) & +0.022 (0.32)\\
\midrule
\multirow{3}{*}{Concave--convex} & Mild & +0.034 (0.23) & +0.035 (0.23) & \textbf{+0.030} (0.23) & +0.031 (0.23) & +0.032 (0.23)\\
 & Moderate & +0.015 (0.26) & +0.020 (0.25) & \textbf{+0.012} (0.25) & +0.013 (0.25) & +0.015 (0.25)\\
 & Severe & \textbf{-0.002} (0.31) & +0.012 (0.30) & -0.006 (0.30) & \textbf{-0.002} (0.30) & +0.003 (0.30)\\
\midrule
\multirow{3}{*}{Convex--concave} & Mild & +0.027 (0.23) & +0.029 (0.22) & \textbf{+0.024} (0.22) & +0.025 (0.22) & +0.025 (0.22)\\
 & Moderate & +0.022 (0.27) & +0.031 (0.27) & \textbf{+0.020} (0.27) & +0.022 (0.27) & +0.024 (0.27)\\
 & Severe & +0.020 (0.34) & +0.034 (0.33) & \textbf{+0.015} (0.32) & +0.018 (0.32) & +0.022 (0.32)\\
\bottomrule
\end{tabular}

%% file: tables/badger_grid.tex
\begin{tabular}{rrcccccc}
\toprule
$c$ & prev. & $\Prob(K\ge2)$ & \multicolumn{2}{c}{No dilution} & \multicolumn{3}{c}{Moderate dilution}\\
\cmidrule(lr){4-5}\cmidrule(lr){6-8}
 & & & $\Prob(\log_{10}B_{10}>1)$ & median & $\Prob(\log_{10}B_{10}>1)$ & median & MAE ($\times100$)\\
\midrule
2 & 0.02 & 0.000 & 0.02 & -0.28 & 0.00 & -0.23 & 10.98\\
2 & 0.05 & 0.003 & 0.00 & -0.49 & 0.04 & -0.24 & 11.35\\
2 & 0.10 & 0.010 & 0.00 & -0.66 & 0.16 & 0.06 & 10.05\\
2 & 0.20 & 0.041 & 0.02 & -0.72 & 0.80 & 2.39 & 5.07\\
2 & 0.30 & 0.089 & 0.00 & -0.84 & 1.00 & 5.38 & 3.95\\
3 & 0.02 & 0.001 & 0.00 & -0.55 & 0.02 & -0.43 & 12.27\\
3 & 0.05 & 0.007 & 0.00 & -0.78 & 0.08 & -0.46 & 10.55\\
3 & 0.10 & 0.028 & 0.00 & -0.97 & 0.14 & 0.01 & 7.04\\
3 & 0.20 & 0.107 & 0.00 & -0.99 & 0.96 & 3.29 & 5.73\\
3 & 0.30 & 0.213 & 0.00 & -1.30 & 0.98 & 7.52 & 4.44\\
5 & 0.02 & 0.005 & 0.12 & -0.27 & 0.22 & -0.32 & 15.02\\
5 & 0.05 & 0.023 & 0.00 & -1.08 & 0.14 & -0.35 & 7.79\\
5 & 0.10 & 0.081 & 0.00 & -1.34 & 0.38 & 0.47 & 5.84\\
5 & 0.20 & 0.265 & 0.00 & -1.53 & 0.94 & 5.06 & 4.52\\
5 & 0.30 & 0.470 & 0.00 & -1.78 & 1.00 & 10.50 & 3.52\\
10 & 0.02 & 0.019 & 0.04 & -1.08 & 0.20 & -0.80 & 12.43\\
10 & 0.05 & 0.087 & 0.00 & -1.62 & 0.26 & -0.60 & 6.73\\
10 & 0.10 & 0.267 & 0.00 & -1.88 & 0.68 & 4.79 & 4.71\\
10 & 0.20 & 0.628 & 0.00 & -2.15 & 0.98 & 25.00 & 2.93\\
10 & 0.30 & 0.846 & 0.00 & -2.39 & 1.00 & 23.81 & 2.58\\
\bottomrule
\end{tabular}